%% file: main.tex
\documentclass[aps,prl,reprint,superscriptaddress,longbibliography]{revtex4-2}
\usepackage[T1]{fontenc}
\usepackage{lmodern}
\usepackage{amsmath,amssymb,amsthm,graphicx,booktabs}
\usepackage[colorlinks=true,linkcolor=blue,citecolor=blue,urlcolor=blue]{hyperref}
\input{macros}

\newtheorem{theorem}{Theorem}
\usepackage{mathrsfs}
\newtheorem{proposition}{Proposition}
\begin{document}
\title{Finite Mass Response to Vanishing Perturbations at Cauchy Horizons}
\author{Yi-kun Li}
\email{liyikun@xao.ac.cn}
\affiliation{State Key Laboratory of Radio Astronomy and Technology,
Xinjiang Astronomical Observatory, CAS, 150 Science 1-Street,
Urumqi 830011, China}
\affiliation{College of Astronomy and Space Science,
University of Chinese Academy of Sciences, No.1 Yanqihu East Road,
Beijing 101408, China}
\date{October 19, 2026}
\begin{abstract}
Smooth transverse perturbations with vanishing $L^p$ norm, for fixed
finite $p\ge4$, can leave a finite positive mass increment at a Cauchy
horizon. We prove this at a critical surface-gravity relation in the
spherical Einstein--Maxwell--scalar system with positive cosmological
constant. Each solution has bounded mass and local finite-energy weak
extensions. The response escapes every fixed time window. An analytic
law determines its delay and duration, and resolves enhancement or
cancellation for overlapping packets.
\end{abstract}
\maketitle

Black-hole inner horizons test how much an initial perturbation can
matter after an arbitrarily long evolution. A Cauchy horizon bounds
the region determined by initial data. Its fate under perturbations
therefore bears on both the persistence of the black-hole interior and
the predictive content of the Einstein equations. The familiar possibility
is mass inflation: radiation is amplified until the internal mass
diverges \cite{PoissonIsrael1990}. For charged black holes with positive
cosmological constant, exponential decay competes with this amplification,
and the nonlinear equations also admit bounded-mass interiors with weak
extensions
\cite{Costa2018,Rossetti2025}. These outcomes depend on the decay and
regularity of the initial data. Rough transverse data can generate
mass inflation even when smoother data admit a more regular horizon
\cite{RossettiRough}.

There is a further question within the bounded-mass regime.
Does making the initial perturbation small also make the final
interior mass response small? Finite-time continuous dependence does
not settle this, because the Cauchy horizon is reached at an infinite
value of a characteristic time coordinate. Energy can move to later
times as the initial norm decreases.
For linear waves on a fixed Reissner--Nordstr\"om--de Sitter background,
time translation and nonzero transmission already yield energy
amplification at the Cauchy horizon \cite[Sec.~5]{DafermosSR2018}.

The nonlinear problem adds a physical requirement: the amplified field
must propagate through the geometry that it produces. Its energy changes
the mass and the null expansion, which in turn affect transmission.
To compare terminal responses, the solutions must reach a common
portion of the horizon, and the radiation must be controlled throughout
the evolution. An estimate on any fixed time interval leaves open the
amount of mass deposited later.

We determine a finite mass response in fully nonlinear solutions.
A sequence of smooth packets has vanishing initial $L^p$ norm and
converging initial geometry. Each packet leaves a finite positive mass
increment at the Cauchy horizon. The increment has the same limiting size as the packet
is pushed toward the initial corner, but its appearance is delayed.
The limiting response also has a definite duration, independently
controlled by the packet's logarithmic width.
The response follows from a positive radial mode and an energy estimate
on the evolving geometry.

The key balance occurs when the surface gravities select an energy
that survives concentration of the initial data. Spreading the packet
over a growing range of logarithmic scales lowers its initial norm
while retaining this energy. The mass law shows how a horizon can
remain weakly extendible while
its terminal mass fails to depend continuously on the initial data in
the chosen norm.

\begin{figure*}[t]
\centering
\includegraphics[width=170mm]{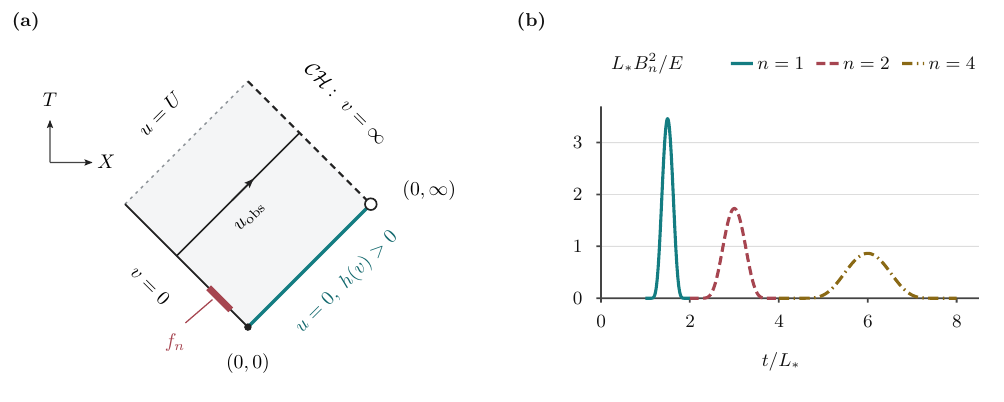}
\caption{\label{fig:spacetime}Initial concentration with a shrinking
$L^p$ norm and fixed envelope energy.
(a) Local characteristic region in a Penrose orientation,
with compactified time $T$ pointing upward and radial null directions
at $45^\circ$ to the spatial coordinate $X$.
The same outgoing tail $h(v)$ is prescribed on $u=0$;
the colored segment on $v=0$ denotes the support of $f_n$, between
$u_*e^{-t_n-L_n}$ and $u_*e^{-t_n}$. Mass is compared on
$u=u_{\rm obs}>0$, along the interior null ray to the Cauchy horizon.
The open right corner $(0,\infty)$ is excluded from the local
extension statement. The dotted edge $u=U$ bounds the common region.
The initial segment
is schematic and is not drawn to scale. (b) Analytic envelope
densities for $t_n=L_n=nL_*$ and $n=1,2,4$, with a fixed large $L_*$.
Each curve has unit area, while
$\|f_n\|_p/\|f_1\|_p=n^{1/p-1/2}$.
The plotted density is $L_*B_n^2/E$;
the physical primitive energy approaches $E$ in the sequence limit.
Both figures use $\psi(y)=\exp(4-1/[y(1-y)])$ on $(0,1)$,
extended by zero.}
\end{figure*}

\textit{Model and critical regime.}
We use a neutral real massless scalar field, constant nonzero electric
charge $e$, and $\Lambda>0$. In spherical symmetry write
$g=-\Omega^2\dd u\dd v+r^2\dd\omega^2$.
The renormalized Hawking mass $\varpi$ is defined by
\begin{equation}\label{eq:mass-def}
 1-\frac{2\varpi}{r}+\frac{e^2}{r^2}-\frac{\Lambda r^2}{3}
 =-\frac{4r_ur_v}{\Omega^2}.
\end{equation}
Fix a nonextremal Reissner--Nordstr\"om--de Sitter reference with
surface gravities $K_+,K_->0$. The critical relation is
\begin{equation}\label{eq:critical-letter}
\begin{gathered}
a=2K_+,\qquad k=K_-=ab,\\
b=1-\frac1p,\qquad 4\le p<\infty.
\end{gathered}
\end{equation}
The equality $K_-/K_+=2(1-1/p)$ marks the boundary of the strict
bounded-mass criterion for rough transverse data with sufficiently
fast horizon decay; suitable rough data above it produce mass inflation
\cite[Theorem~1.1]{RossettiRough}.
The smooth packets below probe finite response at this boundary.
The outer-horizon rate $a$ converts logarithmic distance from the
initial corner into characteristic time. Integrating a transverse
profile of size $u^{-1/p}$ gives the power $u^b$, hence a decay rate
$ab$ along a fixed-radius matching curve. At \eqref{eq:critical-letter}
this rate equals the inner-horizon amplification rate $k$.
The remaining envelope then determines how much energy reaches the
horizon. This balance is sensitive to the exact exponents.
The transverse initial derivative is $f(u)=\phi_u(u,0)$.
The fixed gauge is $r_u(u,0)=-1$ and
$-\Omega^2/(4r_u)=1$ on the outgoing initial hypersurface.
In this gauge its scalar derivative lies in $L^p(\dd u)$, and the
constraint supplies a metric derivative in $L^{p/2}(\dd u)$.

On the outgoing side prescribe the same continuous positive tail
$h(v)=r\phi_v(0,v)$ for every solution, with
\begin{equation}\label{eq:h-letter}
 c_He^{-lv}\le h(v)\le C_He^{-sv},\qquad k<s<l<2s+k .
\end{equation}
The corner is chosen sufficiently late for the compatible horizon
construction and is then fixed. The zero-incoming reference has
$f=0$ and this same nonzero tail; it is a radiating solution.
The comparisons below use the same initial coordinates.

\textit{Concentrating inputs.}
Choose a smooth nonnegative shape $\psi$ supported on $[0,1]$,
positive inside, and set $M_2=\int_0^1\psi^2$.
Let $E>0$, and choose $t_n,L_n\to\infty$ with no restriction on
their relative growth. With $\tau(u)=\log(u_*/u)$, set
\begin{equation}\label{eq:packet-letter}
 f_n(u)=b\sqrt{\frac{E}{M_2L_n}}\,u^{-1/p}
                \psi\left(\frac{\tau(u)-t_n}{L_n}\right).
\end{equation}
The two geometric initial traces are reconstructed from the
constraints, as specified in the End Matter.
Figure~\ref{fig:spacetime}(a) locates these data and the observation
null ray in the local characteristic region.

The smallness of the data and the energy available for transmission
can be seen directly. For large~$n$,
\begin{equation}\label{eq:norm-letter}
 \|f_n\|_p^p=
 b^p(E/M_2)^{p/2}L_n^{1-p/2}\int_0^1\psi^p\longrightarrow0 .
\end{equation}
Transport in $u$ integrates the transverse derivative
before the blueshift amplifies it. The resulting critical energy is
\begin{equation}\label{eq:energy-letter}
 \Ep(f)=\int_0^{U_0}
 \left[u^{-b}\int_0^u f(x)\dd x\right]^2\frac{\dd u}{u}.
\end{equation}
Write the logarithmic envelope as
$B_n(t)=\sqrt{E/(M_2L_n)}\psi((t-t_n)/L_n)$.
The bracket in \eqref{eq:energy-letter} is its future exponential
average $b\int_0^\infty e^{-bz}B_n(\tau+z)\dd z$.
For broad packets this average approaches $B_n$ in $L^2$, giving
$\Ep(f_n)\to\|B_n\|_2^2=E>0$.
Figure~\ref{fig:spacetime}(b) displays how the envelope retains its
squared integral as it broadens. The initial $L^p$ norm decreases,
while the weighted energy selected by propagation remains finite.

\textit{Finite response theorem.}
\begin{theorem}\label{thm:letter}
For the data \eqref{eq:critical-letter}--\eqref{eq:packet-letter},
there is a common small $U>0$ on which the nonlinear developments
have uniformly bounded mass.
The initial $L^p$ perturbation and compatible initial metric
differences tend to zero. For every compact interval
$I\subset(0,U]$ there is a continuous $\mathcal A_h(u)>0$ such that,
uniformly in $u\in I$ and all sufficiently late observation times,
\begin{equation}\label{eq:response-letter}
 \Delta\varpi_n(u,v)=
 \mathcal A_h(u)F_\psi\left(\frac{av-t_n}{L_n}\right)+o(1),
\end{equation}
where $\Delta\varpi_n=\varpi_n-\varpi_0$ and
$F_\psi(y)=M_2^{-1}\int_{-\infty}^y\psi(z)^2\dd z$.
Each development admits a local weak extension at every
Cauchy-horizon point $0<u<U$ with continuous
$W^{1,2}_{\rm loc}$ metric, $L^2_{\rm loc}$ connection, and
$H^1_{\rm loc}$ scalar.
The horizon is locally the future Cauchy horizon of the original
two characteristic initial hypersurfaces.
\end{theorem}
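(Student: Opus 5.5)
The proof has three stages: a uniform nonlinear existence region, a precise asymptotic for the scalar derivative transmitted to the Cauchy horizon, and an energy identity that turns this transmission into the mass law \eqref{eq:response-letter}.

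\emph{Stage 1: common existence region and initial smallness.} I reconstruct $r(u,0)$ and $\Omega(u,0)$ from the Raychaudhuri constraint in the fixed gauge. Since $|f_n|^2\in L^{p/2}$ with norm tending to zero by \eqref{eq:norm-letter}, the metric traces converge to those of the reference solution $f=0$ in $W^{1,p/2}$. This gives the statement that the initial perturbation and compatible metric differences tend to zero.

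Next I run the standard bootstrap for the spherical Einstein--Maxwell--scalar system with $\Lambda>0$ in the three regions redshift, no-shift and blueshift. The bootstrap quantities are $\varpi$, $\log\Omega^2$, and weighted norms of $r\phi_u$ and $r\phi_v$. The key input is that the only norm of $f_n$ entering is the primitive $\int_0^u f_n\lesssim u^{b}\,\sup_t|\text{avg}\,B_n|$, which is uniformly $O(L_n^{-1/2})$ in sup and uniformly bounded in the weighted $L^2$ sense of \eqref{eq:energy-letter}. Taking $U$ small, independent of $n$, the outgoing tail \eqref{eq:h-letter} with $s>k$ controls the blueshift, so the mass stays uniformly bounded. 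The weak extension and the identification of the horizon then follow as in the bounded-mass theory \cite{Costa2018,Rossetti2025}: $r$ and $\Omega$-renormalised metric quantities extend continuously, $\phi$ has $H^1_{\rm loc}$ and the Christoffel symbols are $L^2_{\rm loc}$. The finiteness of \eqref{eq:energy-letter} is exactly what the $H^1$ bound for $\phi_u$ in the Kruskal-type coordinate needs.

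\emph{Stage 2: asymptotic transmission.} Write $\phi_n=\phi_0+\delta\phi_n$ and linearise around the reference, treating the quadratic errors perturbatively using the $o(1)$ sup-smallness of $\delta\phi_n$. Along the redshift region, $r\partial_v\phi$ at a fixed-radius curve $r=r_{\rm match}$ is, to leading order, $c_+\,u^{-b}\!\int_0^u f_n$ evaluated at $u\sim e^{-av}$. This is the exponential average of $B_n$ times $e^{-abv}$. At the critical relation $ab=k$, the blueshift factor $e^{kv}$ along the inner region cancels this decay exactly. Hence, in the inner-horizon Eddington-type coordinate, the field $\partial_v\delta\phi_n$ on a ray $u$ equals $\mathcal T_h(u)\,\tilde B_n(av)$ plus an error that is small in $L^2(dv)$. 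Here $\mathcal T_h(u)$ comes from a positive radial mode: the transmission coefficient of the linearised equation on the reference, which is positive and continuous in $u$ by a maximum-principle and positivity argument using $h>0$. The factor $\tilde B_n$ is the averaged envelope, which converges in $L^2$ to $B_n$ by broadness.

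\emph{Stage 3: mass law.} I use $\partial_v\varpi=\tfrac12(1-\mu)\,r^2\phi_v^2/r_v$ plus charge terms, integrated in $v$. The difference of the mass increments equals $\int^v \kappa^{-1}r^2(\phi_{v,n}^2-\phi_{v,0}^2)$. The cross term $2\phi_{v,0}\,\delta\phi_v$ vanishes in the limit because $\phi_{v,0}$ carries the $e^{-sv}$ tail with $s>k$. The quadratic term gives $\mathcal A_h(u)\int_{-\infty}^{av}B_n^2\,d(\cdot)/E$ up to normalisation, and this integral is precisely $F_\psi((av-t_n)/L_n)$. Uniformity over compact $I$ and late $v$ follows because all errors are controlled by the bootstrap constants, and geometry changes feed back only at $o(1)$.

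\emph{Main obstacle.} The hardest step is Stage 2's uniform error control. Since the response escapes every fixed time window, finite-time continuous dependence is useless. I would therefore prove the linear asymptotics with error bounds in weighted $L^2(dv)$ that are uniform on the whole half-line, and close the nonlinear difference estimate by a Gr\"onwall argument in $v$ whose growth rate is exactly $k$. I would offset that rate against the $e^{-abv}$ decay, using the critical equality together with the $o(1)$ sup-norm of the primitive to absorb the borderline logarithmic loss.
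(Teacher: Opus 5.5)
You have the same architecture as the paper (a common region, transmission by a positive radial mode, integration of the mass flux). There is, however, a genuine gap. The step you flag as the main obstacle is not resolved by what you propose, and it is the core of the proof.

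The quantity driving the mass is $\int e^{2kv}\theta_n^2\,\mathrm{d}v$, which is of order $E$ and not small. The packet arrives near $v\sim t_n/a$, with $t_n/L_n$ unrestricted. Any loss that grows with $v$ is therefore evaluated at $t_n/a$ and multiplies an order-one quantity. This covers the $\epsilon$-losses in exponents typical of the three-region bootstrap, a logarithm, or the factor from a Gr\"onwall argument in $v$. The $O(L_n^{-1/2})$ sup-norm of the primitive cannot absorb such a loss.

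What is needed are loss-free estimates whose constants are invariant under time translation. Three ingredients for this are missing from your plan.
\begin{enumerate}
\item The exact exponent $\Omega^2\asymp u^{-1-2b}e^{-2kv}$ in the late blueshift region. The paper obtains it by integrating the restoring equation $y_v=-\kappa d(r)y-2\kappa(\varpi-M)/r$ for $y=r-r_-$, so that $\int|K+k|\,\mathrm{d}v$ is bounded. Without it, the bound on $\int\theta^2/(r\Omega^2)\,\mathrm{d}v$, and hence the mass bound, is not uniform in $n$.
\item $L^2_v$ bounds for $e^{kv}\theta$ through causal exponential kernels whose operator norms do not depend on the terminal time, with the nonlinear feedback made small by positive powers of $U$.
\item For the asymptotics, an energy estimate integrated in the radius rather than in $v$, on fixed bands, after conjugating by $q_0=z^b$. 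This makes the reference growth $\int_{r_-}^{r_+}(aq_0/r+1/(arq_0))\,\mathrm{d}r$ finite. The comparison with $(X,V)\,C_hB_n(av+d_R)$ then costs only $\|B_n'\|_2=O(L_n^{-1})$ plus coefficient errors vanishing at late times, wherever the packet sits.
\end{enumerate}
Relatedly, sup-smallness of $\delta\phi_n$ does not make the geometric feedback small: $\Delta\varpi_n$ and $\Delta D_n$ stay of order one. You must prove separately that $r_n$, $e^{2kv}\Omega_n^2$ and $\nu_n$ converge with common tails. This holds because $D$ enters $r_v$ only through $\lambda=-\Omega^2D$, and $(r\nu)_v=r\Omega^2P$ has an integrable tail.

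Second, you do not establish positivity of $\mathcal A_h$, and you attribute it to the wrong mechanism. Since $\varpi_v=\theta^2/(2\kappa)=\frac{-2\nu}{e^{2kv}\Omega^2}(e^{kv}\theta)^2$, the response carries the factor $-2N_h/A_h$, where $N_h=\lim_{v\to\infty}\nu_0$. For the exact Reissner--Nordstr\"om--de Sitter reference ($h=0$) one has $\kappa\equiv1$, so $\nu=-\Omega^2/4\to0$ and this factor vanishes. A positive transmission coefficient alone then gives zero leading response.

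The hypotheses $h>0$ and $l<2s+k$ are used exactly here. They keep $\theta_0>0$ in the thin nontrapped region near the event horizon, hence $\zeta_0>0$. The transverse constraint $\partial_u(\nu/\Omega^2)=-\zeta^2/(r\Omega^2)$, renormalized by $e^{2kv}$ and passed to $v=\infty$, then makes $N_h/A_h$ strictly decreasing in $u$, so $N_h<0$. By contrast, the transmission coefficient $\mathcal T=X(r_-)$ is a background constant from a positive Volterra system, independent of $h$ and of $u$. Your proposal assigns positivity to $\mathcal T_h(u)$ ``using $h>0$'' and never determines the limit of $e^{-2kv}/\kappa$, so this step is missing.
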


In particular, at every fixed positive $u$,
\begin{equation}\label{eq:limits-letter}
 \begin{split}
 \lim_{n\to\infty}\lim_{v\to\infty}\Delta\varpi_n(u,v)
     &=\mathcal A_h(u)>0,\\
 \lim_{v\to\infty}\lim_{n\to\infty}\Delta\varpi_n(u,v)&=0 .
 \end{split}
\end{equation}
The second line follows from ordinary finite-time continuous
dependence. The first detects radiation that has escaped every fixed
time window. Thus the terminal mass map is discontinuous at the
radiating reference in the stated initial-data topology.

\begin{figure*}[t]
\centering
\includegraphics[width=170mm]{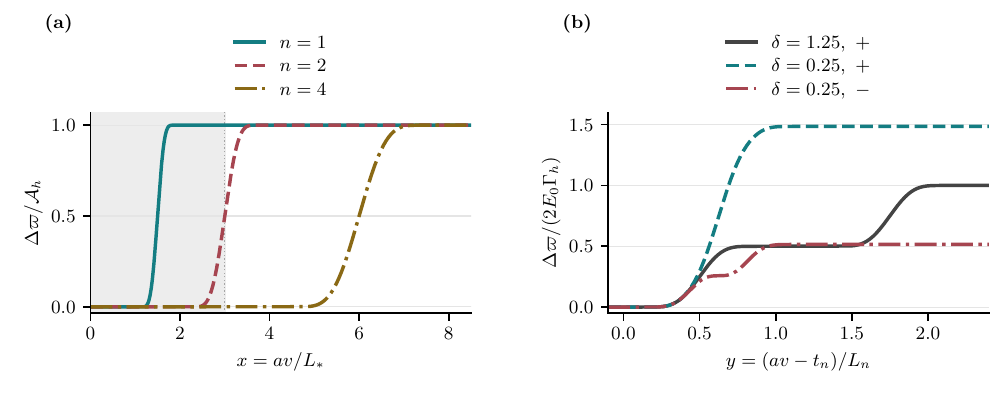}
\caption{\label{fig:response}Leading cumulative mass profiles.
(a) The sequence in Fig.~\ref{fig:spacetime}(b) gives
$\Delta\varpi/\mathcal A_h=F_\psi((x-n)/n)+o(1)$,
where $x=av/L_*$. The shaded interval $0\le x\le3$ is a fixed
observation window. The response moves out of it as $n$ grows,
while the terminal amplitude remains unchanged.
(b) Two packets of equal component energy $E_0$, common width
$L_n$, separation $\delta L_n$, and relative sign $\sigma$ give
$\Delta\varpi/(2E_0\Gamma_h)=G_{\delta,\sigma}(y)+o(1)$,
where $y=(av-t_n)/L_n$ and
$G_{\delta,\sigma}(y)=(2M_2)^{-1}\int_{-\infty}^y
[\psi(z)+\sigma\psi(z-\delta)]^2\dd z$.
All three curves share the same normalization. Separated packets give
two increments with an intervening plateau; overlap enhances or
suppresses the terminal response according to the relative sign.}
\end{figure*}

\textit{From radiation to mass.}
Propagation converts the retained energy into mass through a radiation
gain and a geometric factor. Write $\theta=r\phi_v$.
An exponentially conjugated radial mode has a finite positive inner
amplitude $\mathcal T$. It determines the gain for a slowly varying
envelope. As the envelope broadens, its variation over a fixed
propagation interval decreases. The radial mode therefore transmits
its leading shape with a constant gain. Establishing this comparison
on the evolving geometry requires an estimate that follows the
packet as its arrival time increases. The actual nonlinear radiation
satisfies
\begin{equation}\label{eq:radiation-letter}
 \left\|e^{kv}(\theta_n-\theta_0)
              -C_h\mathcal T B_n(av)\right\|_{L^2_vC^0_u(I)}
 \longrightarrow0 .
\end{equation}
Here $C_h>0$ is fixed by the outgoing tail and the initial gauge.
The comparison field is this radial mode multiplied by $B_n(av)$.
Its transport error is proportional to $\|B_n'\|_2=O(L_n^{-1})$.
A finite phase has a vanishing effect on this broad-packet limit.

The norm in \eqref{eq:radiation-letter} controls the full time axis.
It therefore determines the total transmitted energy as well as
the part seen during the main response. The End Matter outlines the estimate.

The exact flux identities, with $\nu=r_u$ and $D=-r_v/\Omega^2$, are
\begin{equation}\label{eq:massflux-letter}
 D_v=\frac{\theta^2}{r\Omega^2},\qquad
 \varpi_v=(-2r\nu)D_v .
\end{equation}
Let $R_h$, $N_h$, and $A_h$ be the zero-incoming limits of
$r$, $\nu$, and $e^{2kv}\Omega^2$, respectively. These describe the
horizon radius, its transverse variation, and the coefficient of conformal
decay. They satisfy
$R_h,A_h>0$ and $N_h<0$ on~$I$. The actual corresponding geometry
converges to these reference quantities even though
$\Ep(f_n)$ does not vanish.
The transmitted radiation contributes quadratically to the mass flux.
Integrating the exact identities \eqref{eq:massflux-letter} gives
\begin{equation}\label{eq:amplitude-letter}
 \mathcal A_h(u)=
 \frac{-2N_h(u)}{aA_h(u)}\,C_h^2\mathcal T^2E .
\end{equation}
The positive outgoing tail ensures $N_h<0$ through the transverse
constraint. Thus the radiating reference supplies a strictly positive
geometric conversion factor.

The coexistence of geometric convergence and a surviving mass response
is visible in \eqref{eq:massflux-letter}. The normalized null expansion
$D$ accumulates the radiation flux. Its contribution to $r_v=-\Omega^2D$
is suppressed by the decaying conformal factor, whereas its contribution
to the mass contains the finite factor $-2r\nu$.
Consequently, convergence of the radius and the conformal decay
coefficient can accompany a finite change in terminal mass.

\textit{Delay and duration.}
The support of \eqref{eq:packet-letter} lies between
$u_*e^{-t_n-L_n}$ and $u_*e^{-t_n}$.
Its distance from the initial corner is exponentially small in
$t_n$, while its arrival time is linear in $t_n/a$.
Equation \eqref{eq:response-letter} supplies the entire accumulated
response, so the time at which a fixed fraction $q$ is first reached
can be read from $F_\psi$.
For $q$ in a compact subinterval of $(0,1)$,
\begin{equation}\label{eq:quantiles-letter}
 v_{q,n}=\frac{t_n+L_nF_\psi^{-1}(q)}{a}+o_I(L_n).
\end{equation}
For a symmetric shape, the median is
$(t_n+L_n/2)/a+o(L_n)$ and the central duration is
$L_n[F_\psi^{-1}(0.9)-F_\psi^{-1}(0.1)]/a+o(L_n)$.
Position and width are independent controls;
Fig.~\ref{fig:response}(a) illustrates a sequence in which both grow.
The times refer to the fixed characteristic gauge;
the mass difference itself is the scalar observable
\eqref{eq:mass-def} evaluated on corresponding rays.

\textit{Two packets and their relative sign.}
The cumulative law also resolves how successive inputs combine.
Take two smooth envelopes $B_{1,n},B_{2,n}$ of the preceding form,
with fixed component energies $E_1,E_2$ and with each logarithmic
position and width tending to infinity. Prescribe the total
transverse input through $B_{1,n}+\sigma B_{2,n}$, where
$\sigma=\pm1$, and reconstruct its compatible geometry from the
constraints. The same energy estimate applies to the combined envelope
and gives, uniformly on $I$ and over all late cumulative endpoints,
\begin{equation}\label{eq:pair-letter}
 \begin{split}
 \Delta\varpi_n(u,v)&=\Gamma_h(u)
  \int_{-\infty}^{av}(B_{1,n}+\sigma B_{2,n})^2\dd t+o_I(1),\\
 \Gamma_h(u)&=\mathcal A_h(u)/E>0 .
 \end{split}
\end{equation}
The coefficient depends on the common radiating reference.
The two scalar components propagate on the same evolving metric,
and their product enters the gravitational source.

Separated envelopes give two successive increments with an
intermediate plateau in the leading profile. When the envelopes
overlap, their relative sign controls the contribution
$2\sigma\int B_{1,n}B_{2,n}$.
Figure~\ref{fig:response}(b) compares these possibilities at equal
component energies. Same-sign overlap increases the terminal
increment, whereas opposite signs reduce it. Holding the two
component energies fixed in this comparison allows the energy of
the combined input to vary with their overlap. Identical envelopes
with opposite signs give exactly the reference solution.

\textit{Discussion.}
Our nonlinear family combines bounded mass and weak extendibility
with discontinuous dependence of the terminal mass on the initial
$L^p$ data. The surviving weighted energy records information lost by
this norm at the critical relation and provides a natural quantity
for comparing terminal responses.

Finite observation windows can conceal late behavior in nonlinear
black-hole interiors \cite{Luna2019,Luna2021}.
Finite amplification also occurs near evolving inner trapping horizons
\cite{CarballoRubio2024}.
Our cumulative law quantifies when the response becomes visible:
observation times must follow $(t_n+yL_n)/a$ for $y$ inside the envelope.
In a regular coordinate with the horizon at $V=0$, these late intervals
accumulate at that boundary. Initial concentration thus becomes
accumulation near the future horizon.

The characteristic problem isolates critical concentration.
Connecting these inputs to exterior collapse would determine which
packets arise dynamically. Rapidly varying envelopes would probe the
frequency dependence of transmission and the reach of the response
law beyond slowly varying inputs.

\begin{acknowledgments}
\textit{Acknowledgments.}
GPT-6 Astra was used for the Lean 4 implementation and language editing.
The author checked the initial assumptions and final theorem statements
against the manuscript and confirmed their agreement. Some proof
strategies suggested by the model to facilitate formalization were
incorporated into the manuscript to align the written and formalized
proofs.
\end{acknowledgments}

\textit{Data availability.}
The complete Lean 4 formalization of the mathematical results and proofs
in this work, together with verification instructions, is available at
\url{https://github.com/l46430640-del/Finite-Mass-Response-Formal}
(version v0.1.0).

\bibliographystyle{apsrev4-2}
\bibliography{references}
\onecolumngrid
\clearpage
\twocolumngrid
\section*{End Matter}
\input{endmatter}

\onecolumngrid
\clearpage
\setcounter{equation}{0}
\setcounter{section}{0}
\setcounter{subsection}{0}
\makeatletter
\let\arxivoddhead\@oddhead
\let\arxivevenhead\@evenhead
\def\@oddhead{{\fontsize{10}{12}\selectfont\arxivoddhead}}
\def\@evenhead{{\fontsize{10}{12}\selectfont\arxivevenhead}}
\renewcommand\normalsize{\@setfontsize\normalsize{10.95pt}{13.6pt}}
\makeatother
\normalsize
\renewcommand{\theHequation}{supp.\arabic{equation}}
\renewcommand{\theHsection}{supp.\arabic{section}}
\renewcommand{\theHsubsection}{\theHsection.\arabic{subsection}}
\section*{Supplemental Material}

\renewcommand{\theequation}{S\arabic{equation}}
\renewcommand{\thesection}{S\arabic{section}}
\renewcommand{\thesubsection}{\thesection.\arabic{subsection}}
\setcounter{secnumdepth}{2}
\newcommand{\cA}{\mathscr A}

We prove the finite mass response, its cumulative profile, and local
weak extendibility. The proof first constructs a common nonlinear
interior, then compares the radiation with a single positive radial
mode, and finally integrates the exact mass flux. The estimates include
a finite sum of signed broad packets.

\input{sections/setup}
\input{sections/common}
\input{sections/radiation}
\input{sections/response}
\input{sections/local}

\end{document}

%% file: macros.tex
\newcommand{\R}{\mathbb R}

\newcommand{\Ep}{\mathcal E_p}

\newcommand{\dd}{\,\mathrm d}

\newcommand{\one}{\mathbf 1}

%% file: endmatter.tex
\textit{Compatible data.}
Our scalar convention is
$r_{uu}-(\log\Omega^2)_ur_u=-r\phi_u^2$, and similarly in $v$.
Let $r_\bullet=r_H(0)$. The incoming data are
\begin{align}
 r(u,0)&=r_\bullet-u,\nonumber\\
 \phi(u,0)&=\phi_H(0)+\int_0^u f_n(x)\dd x,\label{eq:initial-end}\\
 \Omega^2(u,0)&=4\exp\left[-\int_0^u(r_\bullet-x)f_n(x)^2\dd x\right].
 \nonumber
\end{align}
On $u=0$ set $\varpi_H=M-\frac12\int_v^\infty h^2$ and solve
$r_H'=1-2\varpi_H/r_H+e^2/r_H^2-\Lambda r_H^2/3$ with
$r_H(\infty)=r_+$. Then
$\nu_H=-\exp\int_0^v2K_H$, $\Omega_H^2=-4\nu_H$, and
$\phi_H=-\int_v^\infty h/r_H$, where
$K_H=\varpi_H/r_H^2-e^2/r_H^3-\Lambda r_H/3$.
At a sufficiently late corner the terminal radius equation is a
contraction near $r_+$.
The reference uses the same data with $f_n=0$.

Writing $d_n=\|f_n\|_p$, the incoming scalar and conformal logarithm
differences are bounded by $U_0^bd_n$ and
$r_\bullet U_0^{1-2/p}d_n^2$. The latter derivative has
$L^{p/2}$ norm at most $r_\bullet d_n^2$. Thus $p\ge4$ places the
data in the characteristic finite-energy local theory
\cite{GajicLuk2019}. H\"older's bounds
$\int_0^u|f_n|\le M_pu^b$ and
$\int_0^u f_n^2\le M_p^2u^{1-2/p}$, where $M_p=\sup_n d_n$,
give uniform smallness at the corner.

\textit{The common interior.}
The retained weighted energy controls the scalar field and the
geometry together on a common region of evolution.
Near the outer horizon the
scalar transport has the strict margin $a-k>0$. Across the interior,
integration over thin radius layers bounds the coupling between the
two null derivatives. Near the inner horizon this coupling becomes
small as the matching radius approaches $r_-$.

The critical relation requires the exact conformal decay rate.
The radial equation near $r_-$ has a positive restoring coefficient;
integrating it controls the accumulated error in that rate.
Matching to a fixed-radius curve $u\asymp e^{-av}$ then gives
$\Omega^2\asymp u^{-1-2b}e^{-2kv}$ in the late region.
The resulting radiation estimate holds on every finite time interval
with a constant depending on the total energy budget.
The nonlinear feedback is multiplied by a positive power of $u$,
so choosing one sufficiently small $U$ closes the estimate for the
whole sequence. The mass identity gives the uniform mass bound.
This argument accommodates radiation whose energy moves to
arbitrarily late times.

\textit{A positive mode.}
Put $F_0=1-2M/r+e^2/r^2-\Lambda r^2/3$.
Define $z'/z=a/F_0$ with $z/(r_+-r)\to1$, and set $q_0=z^b$.
In the inward coordinate $x=r_+-r$, solve
\begin{equation}
 X_x=\frac{a q_0V}{r},\qquad V_x=\frac{X}{arq_0},\qquad
 (X,V)|_{x=0}=(0,r_+b).
\end{equation}
Both coefficients are integrable: $q_0$ behaves as
$(r_+-r)^b$ at the outer endpoint and $(r-r_-)^{-1/2}$ at the
inner one. The positive integral equations give
$\mathcal T=X(r_-)\in(0,\infty)$.
For $W=q_0V$ they also give
$X(Y)=\mathcal T+O(\sqrt{Y-r_-})$ and
$(-F_0(Y))W(Y)=O(\sqrt{Y-r_-})$.
The horizon data fix
\begin{equation}
 c_\nu=\exp\int_0^\infty2(K_H-K_+)\dd v,\qquad
 C_h=\frac{r_\bullet}{r_+c_\nu^b}.
\end{equation}

\textit{Radiation comparison.}
Write $\lambda=r_v$ and $\zeta=r\phi_u$.
On a fixed radius band normalize the total scalar fields as
$e^{kv}\theta$ and $e^{kv}u\zeta/q_0$ on each actual metric.
The reference coupling coefficients are $aq_0/r$ and $1/(arq_0)$,
both integrable on the full radius interval. Gronwall's inequality
therefore gives a time $L^2$ bound uniform in both radius cutoffs.
The actual coefficient errors, including the time derivative of the
characteristic speed, tend to zero at late times.
Comparison with $(X,V)C_h B_n(av+d_R)$, where
$d_R=\log(c_\nu u_*)-\log z(R)$, leaves those errors and $B_n'$.
A fixed phase change also costs only a constant times $\|B_n'\|_2$.

Apply this energy estimate on $[T,\infty)$, keeping its inflow at
$v=T$. At fixed $T$ the inflow converges to the reference inflow
by finite-time continuous dependence. The reference band traces
decay exponentially, since $\min(a,s)>k$. First take the packet
limit, then $T\to\infty$. The outer matching error can be made
arbitrarily small as its radius approaches $r_+$;
the inner error vanishes by $(-F_0(Y))W(Y)\to0$.
Late transverse integration has decaying kernels, which control the
supremum over $I$ inside the time integral.
Subtract the reference radiation and join this late comparison
to finite-time convergence on the fixed history. This proves
\eqref{eq:radiation-letter} on the entire observation half-line.
See Sec.~S3 of the Supplemental Material for the proof.

\textit{Accumulation.}
The identity
$(r\nu)_v=r\Omega^2[-1/(4r)+e^2/(4r^3)+\Lambda r/4]$
gives common tails for $\nu$, in addition to those for $r$ and
$e^{2kv}\Omega^2$. Finite-time convergence thus determines their
boundary limits without requiring radiation-energy convergence.
The positive outgoing tail generates $\zeta_{0,\infty}>0$;
the transverse constraint then gives $N_h<0$.

In $\theta_n^2-\theta_0^2$, the cross term is controlled by local
convergence on $[v_0,T]$ and the reference radiation tail on
$[T,\infty)$. Strong $L^2$ comparison controls the remaining square
in $L^1$. Substitution in \eqref{eq:massflux-letter} and
$\dd t=a\dd v$ gives \eqref{eq:amplitude-letter} and controls
every cumulative upper endpoint. Inverting the strictly increasing
part of $F_\psi$ gives \eqref{eq:quantiles-letter}.
For two signed packets, use their sum in the scalar equation and
its full square in the constraints; the same comparison proves
\eqref{eq:pair-letter}.

\textit{Weak extension.}
The coordinate $\dd V/\dd v=\Omega^2(U,v)$, with $V(\infty)=0$,
is regular at the horizon. Finite radiation energy gives square-integrable
$V$ derivatives and continuous transverse traces. Compatible
characteristic continuation across $V=0$ yields the weak extensions
stated in the theorem.

%% file: sections/setup.tex
\section{Equations, data, and packet norms}\label{sec:setup}
Fix $e\ne0$, $\Lambda>0$, and a mass $M$ for which
$F_0(r)=1-2M/r+e^2/r^2-\Lambda r^2/3$ has simple positive inner,
event, and cosmological roots $r_-<r_+<r_c$.
Put $a=F_0'(r_+)=2K_+$, $k=-F_0'(r_-)/2=K_-$, and assume
\begin{equation}\label{eq:critical}
 b=1-1/p=k/a,\qquad 4\le p<\infty,\qquad \rho=2b.
\end{equation}
All constants below may depend on this fixed background. The time
normalization is fixed by $\kappa=1$ on $u=0$.

For $g=-\Omega^2\dd u\dd v+r^2\dd\omega^2$, define
\begin{align}\label{eq:variables}
 &\nu=r_u,\quad\lambda=r_v,\quad\theta=r\phi_v,\quad\zeta=r\phi_u,
 \quad\ell=\log\Omega^2,\nonumber\\
 &F=F_0-2(\varpi-M)/r,\quad
 K=\varpi/r^2-e^2/r^3-\Lambda r/3,\quad
 \kappa=-\Omega^2/(4\nu).
\end{align}
The equations in the region $\nu<0$ are
\begin{align}
 &r_u=\nu,\quad r_v=\lambda=\kappa F,\quad
 \nu_v=\lambda_u=2\kappa K\nu,\quad
 \kappa_u=\kappa\zeta^2/(r\nu),\label{eq:first}\\
 &\theta_u=-\lambda\zeta/r,\qquad
 \zeta_v=-\nu\theta/r,\qquad
 \varpi_v=\theta^2/(2\kappa),\qquad
 \varpi_u=F\zeta^2/(2\nu).\label{eq:scalar}
\end{align}
Equivalently, with
$P(r)=-1/(4r)+e^2/(4r^3)+\Lambda r/4$ and
$Q(r)=1/(2r^2)-e^2/r^4$,
\begin{align}
 r_{uv}&=\Omega^2P-\nu\lambda/r,&
 \phi_{uv}&=-(\nu\phi_v+\lambda\phi_u)/r,\nonumber\\
 \ell_{uv}&=-2\phi_u\phi_v+\Omega^2Q+2\nu\lambda/r^2,&
 r_{uu}-\ell_ur_u&=-r\phi_u^2,
 \quad r_{vv}-\ell_vr_v=-r\phi_v^2.\label{eq:wave}
\end{align}
In the trapped region put
\begin{equation}\label{eq:flux}
 D=-\lambda/\Omega^2>0,\qquad q=(4D)^{-1}=\nu/F,
 \qquad D_v=\frac{\theta^2}{r\Omega^2},\qquad
 \varpi_v=(-2r\nu)D_v,\qquad (r\nu)_v=r\Omega^2P.
\end{equation}

\subsection{Compatible characteristic data}
Fix a continuous tail
\begin{equation}\label{eq:tail}
 0<c_He^{-lv}\le h(v)\le C_He^{-sv},\qquad k<s<l<2s+k.
\end{equation}
Choose a sufficiently late common corner and translate it to $v=0$.
The constants in \eqref{eq:tail} refer to this translated gauge.
On $u=0$ prescribe
\begin{align}\label{eq:horizon}
 &\varpi_H=M-\tfrac12\int_v^\infty h^2\dd w,\quad
 r_H'=F(r_H,\varpi_H),\quad r_H(\infty)=r_+,\nonumber\\
 &\nu_H=-\exp\int_0^v2K_H\dd w,\quad
 \Omega_H^2=-4\nu_H,\quad
 \phi_H=-\int_v^\infty h/r_H\dd w.
\end{align}
For $z_H=M-\varpi_H$ and $y_H=r_+-r_H$, put
$\mathcal N(y)=ay+F_0(r_+-y)$. The terminal equation is
\begin{equation}\label{eq:terminal}
 y_H(v)=\int_v^\infty e^{-a(w-v)}
 \left\{\mathcal N(y_H(w))+
 \frac{2z_H(w)}{r_+-y_H(w)}\right\}\dd w.
\end{equation}
The rescaling $y_H=e^{-2sv}Y$ gives a future kernel of rate $a+2s$.
Since $\mathcal N(0)=\mathcal N'(0)=0$, its integral map is a
contraction with constant at most $1/2$ on a small closed ball in
$\|Y\|_\infty$ once the corner is sufficiently late.
Thus $y_H$, $\varpi_H-M$, and $K_H-K_+$ are
$O(e^{-2sv})$. Differentiating the radial equation gives
\begin{equation}\label{eq:lambda-h}
 \lambda_H(v)=\int_v^\infty
 e^{-\int_v^w2K_H\dd z}\frac{h(w)^2}{r_H(w)}\dd w>0.
\end{equation}
No derivative of $h$ is required. In particular
$c_\nu=\exp\int_0^\infty2(K_H-K_+)\dd v$ is finite and positive.

With $r_\bullet=r_H(0)$ and $0<U_0<r_\bullet$, the other side is
\begin{equation}\label{eq:data}
 r(u,0)=r_\bullet-u,\qquad
 \phi(u,0)=\phi_H(0)+\int_0^u f(x)\dd x,\qquad
 \ell(u,0)=\log4-\int_0^u(r_\bullet-x)f(x)^2\dd x.
\end{equation}
These formulas satisfy both corner compatibility and the incoming
constraint. The zero-incoming reference, denoted by a subscript $0$,
uses $f=0$ and the same $h$.

\subsection{The packet family}
Let $\psi\in C_c^\infty(\R)$ have support $[0,1]$ and be positive
on $(0,1)$, and set $M_2=\int\psi^2$. Fix $u_*>0$ and $E>0$, and define
\begin{equation}\label{eq:packets}
 B_n(t)=\sqrt{\frac{E}{M_2L_n}}\psi\left(\frac{t-t_n}{L_n}\right),
 \quad f_n(u)=bu^{b-1}B_n(\tau(u)),\quad
 \tau(u)=\log(u_*/u),\quad t_n,L_n\longrightarrow\infty.
\end{equation}
No relation between the two divergent sequences is assumed. A finite
prefix may be retained after restriction to $[0,U_0]$.
The change of variables $u=u_*e^{-t}$ gives, for all sufficiently
large $n$,
\begin{equation}\label{eq:norms}
 \|f_n\|_p^p=b^p(E/M_2)^{p/2}L_n^{1-p/2}\int\psi^p\to0,
 \quad\|B_n\|_2^2=E,\quad
 \|B_n'\|_2=\frac{\sqrt{E/M_2}}{L_n}\|\psi'\|_2\to0.
\end{equation}
The same properties, with a bounded $L^2$ norm, hold for any fixed
finite sum of these packets with either sign. In the proofs $B_n$
may denote such a sum. It has a common $L^\infty$ bound $A_*$,
its left support tends to infinity, and $f_n\to0$ in $L^p$.
Writing $M_p=\sup_n\|f_n\|_p<\infty$, H\"older gives
\begin{equation}\label{eq:data-bounds}
 \int_0^u|f_n|\le M_pu^b,\qquad
 \int_0^u f_n^2\le M_p^2u^{1-2/p}.
\end{equation}
Both powers are positive and give a common rate of smallness at
the corner.

The primitive and its energy are
$G_n(u)=u^{-b}\int_0^u f_n$ and
$\Ep(f_n;U_0)=\int_0^{U_0}G_n(u)^2\dd u/u$.
In logarithmic coordinates,
\begin{equation}\label{eq:average}
 G_n(u)=u^{-b}\int_0^u f_n(x)\dd x
       =(H_bB_n)(\tau(u)),\qquad
 H_bB(t)=b\int_0^\infty e^{-bz}B(t+z)\dd z.
\end{equation}
Minkowski's inequality gives $\|H_b B\|_2\le\|B\|_2$ and
$\|H_bB-B\|_2\le b^{-1}\|B'\|_2$.
For large $n$, $B_n$ vanishes below $\tau_* =\log(u_*/U_0)$.
Restriction to $[\tau_*,\infty)$ is an $L^2$ contraction, so
\begin{equation}\label{eq:restricted-energy}
 \left|\Ep(f_n;U_0)-\|B_n\|_2^2\right|
 \le \frac{2}{b}\|B_n\|_2\|B_n'\|_2\longrightarrow0.
\end{equation}
Thus for a single packet $\Ep(f_n;U_0)\to E$; for a signed sum it
differs from $\|B_n\|_2^2$ by $o(1)$.
The constraints give initial $C^0$ differences bounded by
$U_0^b\|f_n\|_p$ and $r_\bullet U_0^{1-2/p}\|f_n\|_p^2$ for
$\phi$ and $\ell$, respectively, while
$\|\partial_u(\ell_n-\ell_0)\|_{p/2}\le r_\bullet\|f_n\|_p^2$.

%% file: sections/common.tex
\section{A common nonlinear interior}\label{sec:common}
The following construction uses only the bounds in
\eqref{eq:norms}--\eqref{eq:data-bounds}, the common pointwise coefficient
$|f_n(u)|\le bA_*u^{-1/p}$, and the fixed outgoing data.
Let $E_*\ge\sup_n\|B_n\|_2^2$.

The main issue is uniform control as the radiation moves to later
times. Coarse decay first locates the fixed-radius matching curves
and keeps the area radius positive. Integrating the restoring term
in the radial equation then recovers the exact conformal exponent
$2k$, needed to measure the critical radiation energy.
The scalar transport estimate uses this weight on finite time
intervals. Its feedback into the null expansion is small near the
corner, which gives one region of existence for the whole family.
Finally, integrable tails for the geometric variables connect
finite-time continuous dependence to their horizon limits.

\begin{proposition}\label{prop:common}
There is one $U>0$ such that all the characteristic developments,
including the zero-incoming reference, exist on
$[0,U]\times[0,\infty)$ with $r\ge r_*>0$ and uniformly bounded
$\varpi$. On every compact $I\subset(0,U]$, for some fixed $v_I$,
\begin{equation}\label{eq:energy-bound}
 \int_{v_I}^\infty e^{2kv}\sup_{u\in I}|\theta_n(u,v)|^2\dd v
 \le C_I\mathcal B^2,\qquad
 \mathcal B^2=E_*+M_p^2+\frac{C_H^2}{2(s-k)}.
\end{equation}
Writing $\cA_n=e^{2kv}\Omega_n^2$, the fields
$r_n,\cA_n,\nu_n$ have common convergence tails at $v=\infty$,
and converge uniformly to their zero-incoming counterparts on the
compactified strip $I\times[v_I,\infty]$.
In particular $r_n,\cA_n$ have common positive lower bounds there.
The choice of $U$ may depend on $E_*$ and the sequence through
$M_p,A_*$, but is independent of the observation time.
\end{proposition}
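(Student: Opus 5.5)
The proof is a bootstrap on a thin strip $[0,U]\times[0,\infty)$, organized by the three standard regions of Reissner--Nordstr\"om--de Sitter interior analysis: a red-shift region near $r_+$, a no-shift region at fixed radius, and a blue-shift region near $r_-$. It follows the outline in the End Matter.

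\emph{Local existence and bootstrap assumptions.} Local existence for each $n$ comes from the finite-energy characteristic theory of \cite{GajicLuk2019}. This applies because $f_n\in L^p$ with $p\ge4$ and $\partial_u\ell_n\in L^{p/2}$. On the maximal common region we make three bootstrap assumptions:
\begin{itemize}
\item[(i)] $r\ge r_*$ and $|\varpi-M|\le \epsilon_0$;
\item[(ii)] $|\zeta|/|\nu|\le C u^{-1/p}$ up to a constant;
\item[(iii)] the weighted $L^2$ bound \eqref{eq:energy-bound} holds with constant $2C_I$.
\end{itemize}

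\emph{Red-shift region $\{r\ge r_+-\delta\}$.} Here $\nu_v=2\kappa K\nu$ with $K\approx K_+>0$. Together with the horizon data \eqref{eq:horizon} this gives $-\nu\asymp e^{av}$ relative to $u$. Integrating $\zeta_v=-\nu\theta/r$ and $\theta_u=-\lambda\zeta/r$ against the data bounds \eqref{eq:data-bounds}, using the margin $a-k>0$, controls $e^{kv}\theta$. The incoming contribution is $H_bB_n$, with $L^2$ norm at most $\|B_n\|_2$ by \eqref{eq:average}. The outgoing contribution is bounded by $C_H^2/2(s-k)$ from \eqref{eq:tail}. These are exactly the three terms in $\mathcal B^2$.

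\emph{Matching curve and no-shift region.} The curve $r=r_+-\delta$ sits at $u\asymp e^{-av}$. Across $r_+-\delta\ge r\ge r_-+\delta$, integration in thin radius layers bounds the coupling coefficients $aq_0/r$ and $1/(arq_0)$ of the End Matter mode system. Gronwall's inequality then transports the weighted energy with a constant depending only on the background.

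\emph{Blue-shift region and exact conformal rate.} This step is the main obstacle. We need the exact rate $\cA_n=e^{2kv}\Omega_n^2$ bounded above and below, not merely $e^{-(2k\pm\epsilon)v}$, because at criticality $ab=k$ any loss of exponent destroys the energy bound. I would first integrate $\ell_{uv}=-2\phi_u\phi_v+\Omega^2Q+2\nu\lambda/r^2$ from the matching curve. Writing $Q$ near $r_-$ as $-F_0'(r_-)\cdot(\ldots)$ plus a restoring correction proportional to $r-r_-$, the deviation of $\ell_v$ from $-2k$ obeys a damped equation in $r$. The total error is then bounded by $\int|\phi_u\phi_v|$ plus $\int(r-r_-)$. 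The first integral is controlled by Cauchy--Schwarz using (ii), (iii), and the relation $u^{b}\sim e^{-kv}$ on the matching curve. This yields $\Omega^2\asymp u^{-1-2b}e^{-2kv}$, and hence a positive lower bound and convergence for $\cA_n$.

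\emph{Mass bound and closing the bootstrap.} The mass identities $\varpi_v=\theta^2/(2\kappa)$ and $\varpi_v=(-2r\nu)D_v$ in \eqref{eq:flux}, combined with (iii) and $\cA$ bounded below, give $|\varpi-M|\le C\mathcal B^2\,U^{\gamma}$ for some $\gamma>0$, using the positive powers in \eqref{eq:data-bounds}. Every nonlinear feedback term carries such a factor $U^\gamma$. Choosing $U$ small, depending only on $E_*,M_p,A_*$, therefore improves all three bootstrap assumptions uniformly in $n$ and in $v$.

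\emph{Convergence on the compactified strip.} For $r_n,\cA_n,\nu_n$ we use the identity $(r\nu)_v=r\Omega^2P$, the equation $r_v=-\Omega^2D$, and the $\ell$-equation. Each has $v$-derivative bounded by $Ce^{-2kv}$ times quantities that are uniformly bounded via \eqref{eq:energy-bound}. This gives common integrable tails $\sup_n|\cdot(v)-\cdot(\infty)|\le\eta(T)\to0$ for $v\ge T$. Finite-time continuous dependence from the data convergence $f_n\to0$ in $L^p$, with metric differences as in \eqref{eq:data-bounds}, gives uniform convergence on $I\times[v_I,T]$. Combining the two and letting $T\to\infty$ yields uniform convergence on $I\times[v_I,\infty]$. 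The lower bounds then follow from those of the zero-incoming reference solution.
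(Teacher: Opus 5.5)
You have the red-shift step right (kernel with margin $a-k$, sources $H_bB_n$ and the outgoing tail), you correctly see that the exact exponent $2k$ is needed, and your final tails-plus-continuous-dependence argument matches the paper. A radius-layer Gronwall on a fixed band is a plausible substitute for the paper's weight $e^{-\mu ue^{av}}$.

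\textbf{The gap.} You never propagate the critical bound (iii) through the blue-shift region and the late region to the rays $u\in I$, and the smallness you invoke to close it is not available there. In the early inner region $s_u\le v\le\alpha s_u$, the loop $\theta\to\zeta\to\theta$ through $\theta_u=-\lambda\zeta/r$ and $\zeta_v=-\nu\theta/r$ has three features that defeat your closing argument:
\begin{enumerate}
\item It is linear in the scalar field, so it is not a nonlinear feedback.
\item At late $v$ the region lies in $u_Y(v)\le u\le u_\gamma(v)\le Ce^{-av/\alpha}$, so shrinking $U$ does not affect it.
\item It acts over a time interval of length $\beta s_u\to\infty$.
\end{enumerate}
With no small coefficient, nothing prevents an unbounded loss, and the critically weighted radiation has no spare decay to absorb one.

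The paper closes this step with a specific mechanism. The exact factors \eqref{eq:early-factors} and the Jacobian identity $(-\nu_Y)(-u_Y')=-\lambda_Y$ in \eqref{eq:arrival} turn the horizontal integrals into causal time kernels. These are the free term $C\Lambda_*\int e^{-k(v-t)}|Z_Y(t)|\,dt$ and the feedback \eqref{eq:early-kernel}, whose $L^2$ operator norm is at most $C\Lambda_*^2/(4k^2)$. Then $Y$ is chosen close to $r_-$, so that $\Lambda_*\le C(Y-r_-)$ makes the feedback less than $1/4$. The trace $Z_\gamma$ on the matching curve is bounded by the time-reflected kernel $R_V$.

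Transport from $\gamma$ to a fixed ray then uses \eqref{eq:late-free}--\eqref{eq:late-feedback}. Their margins $\xi=k\beta/\alpha$ and $\chi=3\xi$ exist only because the matching curve is delayed to $v=\alpha s_u$ and $\Omega^2\asymp u^{-1-\rho}e^{-2kv}$ holds exactly. Only this last feedback is made small through $U$, via $e^{-\chi v_\gamma(U)}$. For the same reason a bootstrap on $\sup_{u\in I}$ alone cannot close, because the sources live at $u\to0$. You need envelopes over $v$-dependent ranges (the paper's $M_\mu,M_e,M_l$), with the geometry closed by the bootstrap $J_Y\le1/4$ through \eqref{eq:closure}.

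\textbf{The exponent step.} This is also imprecise at its crux. You must bound $\int_{s_u}^{\alpha s_u}|K+k|\,dv$ independently of $s_u$, and smallness of $|r-r_-|$ does not give that over an interval of length $\beta s_u$. The damped quantity is $y=r-r_-$. It obeys $y_v=-\kappa d(r)y-2\kappa\Delta m/r$ with $d(r_-)=2k$ and decaying $\Delta m$, which gives \eqref{eq:damping}. The remaining piece $\ell_v-2K=\kappa_v/\kappa+2K(\kappa-1)$ is controlled by the identity \eqref{eq:ell-identity}. No damped equation for $\ell_v+2k$ is involved.

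\textbf{The tails.} In your last paragraph, the tail of $\log(e^{2kv}\Omega_n^2)$ is not $O(e^{-2kv})$. The $\phi_u\phi_v$ term in $\ell_{uv}$ decays only at the rate coming from $|\phi_v|\le Ce^{-h_cv}$ and \eqref{eq:horizontal}, as in \eqref{eq:late-conformal}. That rate is still integrable, so this point is repairable.
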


\begin{proof}
The local construction in Sec.~\ref{sec:local} supplies existence,
constraint propagation, and continuation on finite rectangles.
We now obtain estimates independent of the terminal time.

\textit{Coarse geometry.}
The division into an outer neighborhood, finite radius
bands, and two inner regions is the characteristic construction of
Refs.~\cite{Costa2018,Rossetti2025,RossettiRough}; at the critical
exponent we will additionally integrate the radial restoring term.
Choose $h_c=k-\epsilon_c>0$ with $l<2s+h_c$, and put
$h_e=(k+h_c)/2$. Choose $\beta>0$ sufficiently small that
$h_c+(k+h_c)\beta<h_e$,
and take the geometric losses smaller than $\rho\beta/2$.
We may weaken the resulting positive transverse power to
\begin{equation}\label{eq:losses}
 \alpha=1+\beta<9/8,\qquad
 \delta_\nu=\rho\beta/2=b\beta>0.
\end{equation}
All losses in this paragraph are fixed, rather than sent to zero.

In a sufficiently thin outer neighborhood, integration of
\eqref{eq:first} and \eqref{eq:scalar} gives
\begin{equation}\label{eq:outer-coarse}
 -\nu\asymp e^{av},\quad
 |\lambda|\le C(e^{-2sv}+ue^{av}),\quad
 |\theta|\le Ce^{-h_ev},\quad
 |\zeta|\le C(|f(u)|+e^{(a-h_e)v}).
\end{equation}
For example, inserting the vertical scalar integral into the
horizontal one bounds its free term by
$Ce^{av}\int_0^u x|f(x)|\dd x
 +Ce^{-2sv}\int_0^u|f|+h$, which is $O(e^{-kv})$ on
$u\lesssim e^{-av}$ by \eqref{eq:data-bounds}.
The twofold integral has an arbitrarily small coefficient as the
neighborhood is made thin; conjugation by $e^{h_ev}$ leaves the
strict margin $a-h_e>0$.
The integrated constraint is
\begin{equation}\label{eq:kappa-bound}
 -\log\kappa=\int_0^u\frac{\zeta^2}{r(-\nu)}\dd x
 \le C\{M_p^2e^{-av}u^{1-2/p}+u e^{(a-2h_e)v}\}.
\end{equation}
The mass equation gives the corresponding quadratic smallness.
Thus the geometry improves the bounds used in the scalar integrals.
This is a finite-time bootstrap starting from \eqref{eq:horizon}.

Let $R$ be the outer radius, fix $r_-<Y<R$, and write
$r(u_Y(v),v)=Y$. Outer smallness on one late slice and
$\varpi_v\ge0$ give $\varpi\ge M-\epsilon$ thereafter.
Choose $\epsilon$ so that $F(r,M-\epsilon)<0$ on $[Y,R]$.
Under the bootstrap $\kappa\ge1/2$, this gives
$-\lambda\ge\lambda_*>0$ and a bounded residence time in the band.
The equation $\partial_v\log(r(-\nu))=-4\kappa P$ then propagates
the outer comparison $-\nu\asymp e^{av}$. The horizontal radial
flux bounds $|\lambda|$, while the outer arrival scale and residence
time give $Ue^{aV}\le\delta_{R,Y}$ for characteristic rectangles
with terminal corner in the band.
These constants depend on $R,Y$, but not on $V$.

Set $F_V(u)=\sup_{0\le w\le V}e^{h_ew}|\phi_v(u,w)|$.
Since $h_e<a(1-1/p)$, H\"older's inequality bounds
$e^{h_eV}\int_0^U|f|$ uniformly on these rectangles.
The vertical and horizontal scalar integrals therefore give
\[
 |\phi_u(u,w)|\le C|f(u)|+CF_V(u)e^{(a-h_e)w},\qquad
 F_V(u)\le C+Ce^{aV}\int_0^u F_V(x)\dd x.
\]
Gronwall yields $F_V\le C\exp(C\delta_{R,Y})$, with no smallness
requirement on the band width. Substitution into
\eqref{eq:kappa-bound} gives
$-\log\kappa\le C_{R,Y}(e^{-av}+e^{-2h_ev})$.
Taking the entry time late improves $\kappa\ge1/2$ to
$\kappa\ge3/4$. Finite-rectangle continuation propagates
\eqref{eq:outer-coarse} to $Y$, and gives
\begin{equation}\label{eq:arrival}
 u_Y(v)\asymp e^{-av},\quad
 0<c_Y\le-\frac{u_Y'(v)}{u_Y(v)}\le C_Y,\quad
 (-\nu_Y)(-u_Y')=-\lambda_Y.
\end{equation}
The last equality follows by differentiating the radius level and
will be used as an exact Jacobian.

Let $v_Y$ be the inverse arrival function, $s_u=v_Y(u)$,
$\Omega_Y^2(u)=\Omega^2(u,s_u)$, and $v_\gamma(u)=\alpha s_u$.
In the early inner region $s_u\le v\le\alpha s_u$, the same
integral bootstrap gives $r$ in a fixed neighborhood of $r_-$,
$1/2\le\kappa\le1$, $|\theta|\le Ce^{-h_ev}$, and,
for some $h_g,a_e>0$,
\begin{equation}\label{eq:early-coarse}
 |\varpi-M|+|\ell_v-2K|\le C_Ye^{-h_gv},\qquad
 J_Y(u,v)\le C_Y\beta s_u e^{-a_e s_u}=:b_\gamma(u),
 \quad J_Y=\int_{s_u}^v\frac{\theta^2}{r(-\lambda)}\dd w.
\end{equation}
The radius bound follows directly from the mass flux. Fix
$0<r_{\mathrm b}<r_-$ close to $r_-$, so $F_0(r_{\mathrm b})>0$.
The scalar bound, $\kappa\ge1/2$, and the matching mass estimate
make $|\varpi-M|$ uniformly small on $[s_u,\alpha s_u]$ for late
$s_u$. Once $\varpi-M\le r_{\mathrm b}F_0(r_{\mathrm b})/4$,
a crossing of $r=r_{\mathrm b}$ would give
$F(r_{\mathrm b},\varpi)\ge F_0(r_{\mathrm b})/2>0$.
This contradicts $\lambda=\kappa F<0$, which persists after entry
by the outgoing constraint. Hence $r_{\mathrm b}<r\le Y$.
Choosing $r_{\mathrm b}$ and $Y$ near $r_-$ and then taking the
entry time late improves $|K+k|$ by continuity at $(r_-,M)$.

To see the strict margins here, divide the square bound
$|\theta|^2\le Ce^{-2h_ev}$ by the coarse inner estimate
$-\lambda\gtrsim e^{-(2k+\epsilon)(v-s_u)}$.
On $v-s_u\le\beta s_u$ its integral is bounded by
$C_Y\beta s_u\exp[-(2h_e-(2k+\epsilon-2h_e)\beta)s_u]$.
The exponent is positive for the stated order of choices.
The constraint integral also has a positive decay margin.
For the conformal estimate it is useful to integrate the exact
identity, on any horizontal segment,
\begin{equation}\label{eq:ell-identity}
 [\ell_v-2K]_{u_1}^{u_2}
 =[2K(\kappa-1)]_{u_1}^{u_2}
 -\int_{u_1}^{u_2}
 \left(2K\kappa_u+\frac{2\kappa\varpi_u}{r^2}
                         +2\phi_u\phi_v\right)\dd u.
\end{equation}
Its sources are controlled by \eqref{eq:data-bounds}; it preserves the
partial derivative $\ell_v$ before evaluation on $\gamma$.

In the late region $v\ge v_\gamma(u)$, the coarse estimates are
\begin{align}\label{eq:late-coarse}
 &r\ge r_*>0,\qquad -\lambda\le Ce^{-\eta_\lambda v},\qquad
 -\nu\le Cu^{-1+\delta_\nu},\qquad |\phi_v|\le Ce^{-h_cv},
 \nonumber\\
 &|\phi_u(u,v)|\le C|f(u)|+Cu^{-A},
 \qquad A=1-\delta_\nu-h_c\alpha/a<1.
\end{align}
The transverse estimate follows directly from $\zeta_v=-\nu\theta/r$.
Before $\gamma$, the outer and band estimates, followed by the
coarse early bound $-\nu\le C u^{-1}e^{-c(v-s_u)}$ with $c>0$,
bound the vertical history by $Cu^{-1+h_e/a}$.
After $\gamma$ it is bounded by
$Cu^{-1+\delta_\nu+h_c\alpha/a}$.
Our choice gives $\delta_\nu+h_c\alpha/a<h_e/a<b$, so the latter
power bounds both histories. Since $r$ stays in a fixed positive
range, the direct incoming term costs only $C|f(u)|$.
Bootstrap $|\phi_v|\le H_Le^{-h_cv}$ and
$\ell_v\le-(k+h_c)$. On the null segment $v=v_\gamma(z)$, the outgoing
constraint gives
\[
 |\lambda(u,v)|\le C(z/u)^\sigma+CH_L^2e^{-2h_cv},
 \qquad \sigma=(k+h_c)\alpha/a>b,\quad \sigma+A>1.
\]
The two horizontal integrals are bounded by H\"older and
$\int u^{-\sigma-A}\dd u$, respectively. The upper arrival
scale converts $z^{h_c\alpha/a}$ into $Ce^{-h_cv}$, while
$z^b$ gains the additional factor $U^{b-h_c\alpha/a}$.
The remaining feedback has a factor $CU^{\delta_\nu}$, and
the inhomogeneous term in the $\lambda$ bound has the positive width
powers $U^b$ and $U^{1-A}$. Fixing $H_L$ before shrinking $U$
therefore improves the scalar bound by a factor of two.
Integration of $-\nu$ costs $CU^{\delta_\nu}/\delta_\nu$;
shrinking $U$ keeps $r\ge r_-/2$ and closes the radius bootstrap.
For the conformal bound, $\kappa\le1$ gives $\Omega^2\le4(-\nu)$.
The conformal wave equation, with the fixed scalar and $\lambda$
bounds, then gives
\[
 |\ell_v(u,v)-\ell_v(z,v)|
 \le C\bigl(U^{\delta_\nu}+M_pU^b+U^{1-A}\bigr).
\]
The early estimates \eqref{eq:early-coarse} and \eqref{eq:damping}
give a decaying bound for $\ell_v+2k$ on $\gamma$; a further
common width reduction gives $|\ell_v+2k|<(k-h_c)/2$.
The scalar and conformal bounds close jointly by first contact.
Retaining the decaying value on $\gamma$ in the homogeneous
term in the $\lambda$ bound also gives the time decay in
\eqref{eq:late-coarse}.
In particular the horizontal integral needed below is
\begin{equation}\label{eq:horizontal}
 \int_{u_\gamma(v)}^U|\phi_u|\dd u
 \le CM_p U^b+\frac{CU^{1-A}}{1-A}.
\end{equation}
These estimates close in the stated initial data spaces, including
the endpoint $p=4$ with its $L^{p/2}$ geometric derivative.

\textit{Exact conformal exponent.}
Write $y=r-r_-$ and $\Delta m=\varpi-M$ in the early region.
On a fixed root neighborhood $d(r)=-F_0(r)/(r-r_-)$ extends
continuously with $d(r_-)=2k$ and has a positive lower bound.
Thus
\begin{equation}\label{eq:damping}
 y_v=-\kappa d(r)y-2\kappa\Delta m/r,\qquad
 \int_{s_u}^v|y(w)|\dd w\le C_0(Y-r_-)+C_Ye^{-h_gs_u}.
\end{equation}
Indeed compare $|y|$ with the convolution of
$C_Ye^{-h_gw}$ and $e^{-c_0(v-w)}$, and exchange the two
integrals. This works even at equal decay rates and when $r$
crosses $r_-$. The Lipschitz bound for $K(r,M)$ gives the same
bound for $\int|K+k|$. Integrating
$\ell_v+2k=(\ell_v-2K)+2(K+k)$ yields
\begin{equation}\label{eq:exact-early}
 \Omega^2(u,v)=\Omega_Y^2(u)e^{-2k(v-s_u)+R_\Omega(u,v)},
 \qquad |R_\Omega|\le C_0(Y-r_-)+C_Ye^{-h_gs_u}.
\end{equation}
The constant $C_0$ is chosen on the root neighborhood before $Y$.

For signed data the late conformal equation is estimated absolutely.
On $\gamma$, the incoming constraint gives
$\Omega^2\le4(-\nu)\le Cu^{-1+\delta_\nu}$.
Set $\eta_\Omega=a\delta_\nu/(2\alpha)>0$.
Since $\eta_\Omega<k+h_c$, integrating the closed bound
$\ell_v\le-(k+h_c)$ gives
\[
 \Omega^2(u,v)\le Cu^{-1+\delta_\nu}
 e^{-(k+h_c)(v-v_\gamma(u))}
 \le Cu^{-1+\delta_\nu/2}e^{-\eta_\Omega v}.
\]
The second inequality uses the upper arrival scale
$u e^{av_\gamma(u)/\alpha}\le C$.
Half of the positive incoming power therefore suffices for
uniform time decay, while the other half remains integrable:
\begin{equation}\label{eq:omega-integral}
 \int_{u_\gamma(v)}^U\Omega^2\dd u
 \le \frac{CU^{\delta_\nu/2}}{\delta_\nu}e^{-\eta_\Omega v}.
\end{equation}
Equations \eqref{eq:late-coarse}--\eqref{eq:horizontal} bound the
other two horizontal absolute integrals in \eqref{eq:wave} by
$Ce^{-h_cv}$ and $Ce^{-\eta_\lambda v}$.
The pointwise damping bound for $y$ at $v=\alpha s_u$, together
with \eqref{eq:early-coarse}, gives
$|\ell_v+2k|(u_\gamma(v),v)\le C_Ye^{-\eta_\gamma v}$
for some $\eta_\gamma>0$. Horizontal integration therefore yields
\begin{equation}\label{eq:late-conformal}
 |\ell_v(u,v)+2k|\le Ce^{-\eta v},\qquad
 u_\gamma(v)\le u\le U,\qquad \eta>0,
\end{equation}
uniformly throughout the late region, using the partial $v$ derivative
on its matching boundary. Now $\Omega_Y^2(u)\asymp u^{-1}$ and
$e^{2ks_u}\asymp u^{-\rho}$ by \eqref{eq:arrival} and the fixed-band
bounds. Integrate \eqref{eq:late-conformal} from $\alpha s_u$ and
use \eqref{eq:exact-early} at that lower endpoint. This gives
\begin{equation}\label{eq:corner}
 D_Y\asymp u,\qquad
 \Omega^2(u,v)\asymp u^{-1-\rho}e^{-2kv},\qquad v\ge v_\gamma(u).
\end{equation}
Both the time exponent and the corner power are exact. The comparison
constants may depend on the fixed construction radius $Y$.

\textit{Radiation on finite time intervals.}
We now improve the coarse decay to the critical energy bound.
All kernels in this step act on absolute values, so the same proof
applies to signed data. The initial horizontal source is
$u^{-b}\int_0^u|f|=H_b|B|(\tau)$, with $L^2$ norm at most
$\|B\|_2$. At a fixed radius its time norm is controlled by
\eqref{eq:arrival}. In the outer neighborhood, conjugation by
$e^{kv}$ gives the feedback kernel
$\epsilon_R e^{-(a-k)(v-w)}\one_{w\le v}$ with
$\epsilon_R/(a-k)<1/2$.
In particular the vertical increment at $x=u_R(t)$ satisfies
\begin{equation}\label{eq:redshift}
 xe^{kt}|\zeta(x,t)-\zeta(x,0)|
 \le C_*e^{-(a-k)t}
 +C_R\int_{v_*}^t e^{-(a-k)(t-w)}M_R(w)\dd w,
\end{equation}
where $M_R=e^{kv}\sup_{u\le u_R(v)}|\theta|$.
The rate follows from $xe^{at}\asymp1$ and is $a-k=a/p>0$.
The finite-history coefficient is bounded by $C\mathcal B$.
The entire finite band can be treated with one spatial weight.
Choose a common late time $T$ on the arrival curve and set
$w(v)=u_Y(\max\{T,v\})$. This boundary is nonincreasing and
$w(v)e^{av}\le\Delta_Y$ for a fixed $\Delta_Y$.
Every past rectangle below it has $r\ge Y$; the coarse band
estimates give $|\nu|\le Ne^{av}$ and $|\lambda|\le L$ there.
The constants are uniform over the data family and in the terminal time.
For $\mu>0$, define
\[
 M_\mu(v)=\sup_{0\le u\le\min(U,w(v))}
       e^{kv-\mu ue^{av}}|\theta(u,v)|,\qquad
 (QM)(v)=\int_0^v e^{-(a-k)(v-t)}M(t)\dd t.
\]
In the vertical scalar integral, $e^{\mu ue^{at}}\le e^{\mu ue^{av}}$
for $t\le v$. Retaining this spatial factor in the horizontal
integral gives
\[
 e^{-\mu ue^{av}}e^{av}\int_0^u e^{\mu xe^{av}}\dd x
 =\frac{1-e^{-\mu ue^{av}}}{\mu}\le\frac1\mu.
\]
Thus the actual transport equations imply
\begin{align}\label{eq:band-feedback}
 M_\mu(v)&\le F(v)+\frac{LN}{\mu Y^2}(QM_\mu)(v),\nonumber\\
 F(v)&=e^{kv}|h(v)|+\frac{Lr_\bullet}{Y}e^{kv}
       \int_0^{\min(U,\Delta_Ye^{-av})}|f(x)|\dd x.
\end{align}
All integration histories remain below $w$ by its monotonicity.
The identity $(QM)'=M-(a-k)QM$, with $QM(0)=0$, and
Cauchy--Schwarz give $\|QM\|_{L^2(0,V)}\le(a-k)^{-1}\|M\|_{L^2(0,V)}$.
Choose $\mu=1+4LN/[Y^2(a-k)]$ to absorb the feedback.
The incoming part of $F$ has squared time norm at most
$(Lr_\bullet/Y)^2\Delta_Y^{2b}\|B\|_2^2/a$, by the exponential
average above and $k=ab$; the outgoing part is integrable since
$s>k$. It follows that $\|M_\mu\|_2\le C_Y\mathcal B$.
Since $e^{\mu ue^{av}}\le e^{\mu\Delta_Y}$ on the domain,
the same bound holds for $M_0$, uniformly in $V$.

The vertical integral also gives, on $u=u_Y(v)$ for $v\ge T$,
\[
 ue^{kv}|\zeta(u,v)-\zeta(u,0)|
 \le \frac{N\Delta_Y}{Y}e^{\mu\Delta_Y}(QM_\mu)(v).
\]
The direct transverse trace
$u_Y(t)e^{kt}(r_\bullet-u_Y(t))f(u_Y(t))$ is bounded in $L^2_t$
by $C_Y\|B\|_2$ using the Jacobian in \eqref{eq:arrival}.
Adding the direct term and the vertical increment bounds the total traces
\begin{equation}\label{eq:traces}
 \Theta_Y(t)=e^{kt}\theta(u_Y(t),t),\qquad
 Z_Y(t)=u_Y(t)e^{kt}\zeta(u_Y(t),t)
\end{equation}
and $M_0$ on the band for $v\ge T$. At positive $u$, continuity of the incoming
profile and the horizontal scalar integral identify this transverse
trace with the actual partial derivative $r\phi_u$.

In the early region the exact factors are
\begin{align}\label{eq:early-factors}
 -\lambda(x,v)&=\Lambda_Y(t)e^{-2k(v-t)+R_\Omega(x,v)+J_Y(x,v)},
 \quad t=v_Y(x),\quad\Lambda_Y=-\lambda_Y,\nonumber\\
 -\nu(x,w)&=(-\nu_Y(t))\frac{\kappa_Y(t)}{\kappa(x,w)}
                          e^{-2k(w-t)+R_\Omega(x,w)}.
\end{align}
Choose $Y$ near $r_-$ and the common initial width small so that
$\Lambda_Y\le\Lambda_*\le C(Y-r_-)$ and the early errors are
bounded by one. Put $v_0=v_Y(U)\ge T$ and $l=\max(v_0,v/\alpha)$.
For $v\ge v_0$, the horizontal source from $Z_Y$ is bounded by
\[
 C\Lambda_*\int_l^v e^{-k(v-t)}|Z_Y(t)|\dd t.
\]
Set $M_e(v)=e^{kv}\sup_{0\le u\le\min(U,w(v/\alpha))}|\theta(u,v)|$.
This continuous envelope equals $M_0$ for $v\le v_0$, contains $M_0$
thereafter and controls every vertical history by monotonicity of $w$.
Its boundary source is
bounded by $M_0(v)$. Substituting the vertical increment gives the feedback
\begin{equation}\label{eq:early-kernel}
 C\Lambda_*^2\int_l^v\int_t^v
       e^{-k(v-t)}e^{-3k(w-t)}M_e(w)\dd w\dd t.
\end{equation}
The exact Jacobian in \eqref{eq:arrival} cancels $-\nu_Y$.
For fixed $v$, set
\[
 P_v(t)=e^{-k(v-t)}\int_t^v e^{-3k(w-t)}M_e(w)\dd w.
\]
Differentiation gives $P_v'=4kP_v-e^{-k(v-t)}M_e(t)$ and $P_v(v)=0$.
The finite-interval identity is
\[
 4k\int_l^v P_v(t)\dd t
 =\int_l^v e^{-k(v-t)}M_e(t)\dd t-P_v(l)
 \le\int_{v_0}^v e^{-k(v-t)}M_e(t)\dd t.
\]
The resulting $L^2(v_0,V)$ operator bound is $1/(4k^2)$, so choose $Y$ to make the feedback less
than $1/4$. The constants in this small coefficient depend only
on the fixed root neighborhood, not on the finite-band constants.
The total trace on the matching curve satisfies
\begin{equation}\label{eq:z-gamma}
 |Z_\gamma(t)|\le |Z_Y(t)|+
 C_Y\int_t^{\alpha t}e^{-3k(w-t)}M_e(w)\dd w,
 \quad Z_\gamma(t)=u_Y(t)e^{kt}\zeta(u_Y(t),\alpha t).
\end{equation}
Positivity bounds this integral by
$R_V(t)=\int_t^V e^{-3k(w-t)}M_e(w)\dd w$ whenever $\alpha t\le V$.
Time reflection turns $R_V$ into the causal kernel with rate $3k$;
the same energy estimate gives
$\|R_V\|_{L^2(0,V)}\le(3k)^{-1}\|M_e\|_{L^2(0,V)}$.
Together with the matching trace bound, this controls $Z_\gamma$
on $v_0\le t\le V/\alpha$, uniformly in $V$.

In the late region first assume $J_Y\le1/4$ up to a finite terminal
time. Equations \eqref{eq:corner} and $D=D_Ye^{J_Y}$ give
$-\lambda\le Cx^{-\rho}e^{-2kv}$.
Using the total trace in \eqref{eq:z-gamma}, changing variables
$w=\alpha v_Y(x)$, and $x\asymp e^{-aw/\alpha}$ bounds the
free horizontal source by
\begin{equation}\label{eq:late-free}
 C\int_{v_\gamma(U)}^v e^{-k(v-w)}e^{-\xi w}
               |Z_\gamma(w/\alpha)|\dd w,
 \qquad \xi=k-\frac{a(\rho-b)}\alpha
             =\frac{k\beta}{\alpha}>0.
\end{equation}
The vertical increment and
$-\nu\le Cx^{-1+\delta_\nu}$ give
\begin{align}\label{eq:late-feedback}
 &Ce^{-kv}\int_{u_\gamma(v)}^U x^{-1-\rho+\delta_\nu}
                \int_{v_\gamma(x)}^v e^{-kw}M_l(w)\dd w\dd x
 \nonumber\\
 &\hspace{12mm}\le C\int_{v_\gamma(U)}^v
       e^{-k(v-w)}e^{-\chi w}M_l(w)\dd w,
 \qquad \chi=2k-\frac{a(\rho-\delta_\nu)}\alpha=3\xi>0,
\end{align}
where $M_l(v)=e^{kv}\sup_{u_\gamma(v)\le x\le U}|\theta(x,v)|$.
To obtain the second line, use the same arrival change of variables
and put $\mu=2k-\chi>0$. The primitive
\[
 P_v(t)=e^{\mu t}\int_t^v e^{-kw}M_l(w)\dd w,
 \qquad P_v'=\mu P_v-e^{(k-\chi)t}M_l(t)
\]
vanishes at $t=v$. Integrating from $v_\gamma(U)$ to $v$ and
discarding the nonnegative lower-endpoint value bounds
$e^{-kv}\int P_v$ by the causal integral in \eqref{eq:late-feedback},
with a factor $1/\mu$ absorbed in $C$.
The causal exponential estimate gives operator norm at most
$Ce^{-\chi v_\gamma(U)}/k$.
Choose $U$ last to make this less than $1/4$.
The time dilation in \eqref{eq:late-free} has norm $\sqrt\alpha$.
Finite-time absorption bounds $M_e$ and $M_l$ by $C\mathcal B$
in $L^2_v$, uniformly on their entire entered regions. It proves
\eqref{eq:energy-bound} under the bootstrap, with constants
independent of its terminal time or the positive ray $u$.

\textit{Flux closure and boundary geometry.}
Since $J_Y=\log(D/D_Y)$, \eqref{eq:flux} and the lower bound in
\eqref{eq:corner} imply
\begin{equation}\label{eq:closure}
 D(u,v)-D_\gamma(u)\le Cu^{1+\rho}\mathcal B^2,\qquad
 J_Y(u,v)\le b_\gamma(u)+\log(1+Cu^\rho\mathcal B^2).
\end{equation}
Shrink the same $U$ so that each term on the right is at most
$1/16$. On a finite terminal rectangle the entered interval has a
positive lower endpoint, and newly entering points have $J_Y=0$.
A first point with $J_Y=1/4$ is therefore impossible. The positive
radius bound and finite-rectangle continuation extend the solution
to every finite time. Monotone convergence gives the energy bound.

The mass bound follows directly from $F=4\nu D$:
\begin{equation}\label{eq:algebraic-mass}
 \varpi=\frac r2\left(1+\frac{e^2}{r^2}
                    -\frac{\Lambda r^2}{3}-4\nu D\right),
 \qquad D\le Cu,\qquad -\nu D\le Cu^{\delta_\nu}.
\end{equation}
The last two bounds hold beyond $\gamma$ by $D=D_Ye^{J_Y}$,
\eqref{eq:closure}, and \eqref{eq:late-coarse}. The positive compact
radius range gives a uniform upper bound for the late mass.
The finite-rectangle estimates above bound its initial values
uniformly. Since $\partial_v\varpi\ge0$, the lower bound propagates
forward and the late upper bound controls every earlier point on
the same ray. Thus $\varpi$ is uniformly bounded on the whole strip.
This construction applies to every fixed finite $E_*$, with $U$
chosen according to its size.

Finally the integrable estimates for $\lambda$ and $\ell_v+2k$
give, for some $\eta_I>0$,
\begin{equation}\label{eq:tails-geometry}
 \sup_{v\ge T}\|r_n(v)-r_n(T)\|_{C(I)}+
 \sup_{v\ge T}\|\log\cA_n(v)-\log\cA_n(T)\|_{C(I)}
 \le C_Ie^{-\eta_I T}.
\end{equation}
Equation \eqref{eq:flux} gives the common $e^{-2kT}$ tail of
$r_n\nu_n$. Local continuous dependence from \eqref{eq:local}
and these tails prove the asserted compactified convergence.
The common tails follow from coarse scalar decay and the horizontal
$L^1$ bound, and remain valid as the radiation energy moves to late times.
\end{proof}

%% file: sections/radiation.tex
\section{A radial mode and the full-time radiation limit}\label{sec:radiation}
Set
\begin{equation}\label{eq:amplitude-data}
 c_\nu=\exp\int_0^\infty2(K_H-K_+)\dd v,\qquad
 C_h=\frac{r_\bullet}{r_+c_\nu^b},\qquad d_h=\log(c_\nu u_*).
\end{equation}
The radial mode constructed below depends only on $F_0$ and $k$.
The radiating reference solution depends also on the fixed tail $h$.
The comparison uses the slow variation of $B_n$: multiplying the
radial mode by this envelope leaves a transport error controlled by
$\|B_n'\|_2$. An energy estimate on the evolving metric controls
that error over the full time interval. The endpoint bounds for the
mode allow the radius cutoffs to be removed, after which transverse
transport carries the comparison to each positive observation ray.

\begin{proposition}\label{prop:radiation}
There is a finite positive number $\mathcal T$ such that, on the
common region of Proposition~\ref{prop:common}, for each compact
$I\subset(0,U]$ and a fixed sufficiently late $v_I$,
\begin{equation}\label{eq:radiation-limit}
 \left\|e^{kv}(\theta_n-\theta_0)-C_h\mathcal T B_n(av)
             \right\|_{L^2([v_I,\infty);C^0(I))}\longrightarrow0.
\end{equation}
This holds for a single packet and for a fixed finite signed sum.
The parameters $t_n,L_n$ of each component may diverge independently.
\end{proposition}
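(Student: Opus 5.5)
We follow the three stages outlined in the End Matter: construct the mode, run a full-time energy comparison on a radius band, then carry the result transversally to $I$.

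\textbf{Step 1: the radial mode.} Build the positive mode $(X,V)$ from the integral system on $x=r_+-r\in[0,r_+-r_-]$, with $q_0=z^b$ and $z'/z=a/F_0$. The coefficients $aq_0/r$ and $1/(arq_0)$ are integrable. At the outer end $q_0\sim (r_+-r)^b$. At the inner end $z\sim (r-r_-)^{-a/(2k)}$, so $q_0\sim (r-r_-)^{-1/2}$, using $b=k/a$. Both integral equations have positive kernels and positive initial datum $(0,r_+b)$, so a Picard iteration converges monotonically. This gives $0<\mathcal T=X(r_-)<\infty$ and the endpoint rates $X(Y)-\mathcal T=O(\sqrt{Y-r_-})$ and $F_0W=O(\sqrt{Y-r_-})$.

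Then check that, for a constant envelope, the pair
\[
(e^{kv}\theta,\; e^{kv}u\zeta/q_0)=(X,V)\,C_h B(av+d_R)
\]
solves the reference transport equations from \eqref{eq:scalar} up to coefficient errors. The conjugation uses $ue^{av}\asymp$ const along fixed radii and the exact rates from \eqref{eq:arrival} and \eqref{eq:corner}. The constant $C_h$ is read off from the incoming trace $u\zeta(u,0)\approx r_\bullet uf_n(u)=r_\bullet b u^b B_n(\tau)$, combined with $u_R(v)\approx c_\nu^{-1}z(R)e^{-av}$ from \eqref{eq:horizon}. This matching is exactly where $k=ab$ is used: it makes the weight $u^b e^{kv}$ constant along arrival curves.

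\textbf{Step 2: energy comparison on $[T,\infty)$.} Let $\delta=e^{kv}(\theta_n-\theta_0)-C_hXB_n(av+d_R)$, together with the analogous $\zeta$-difference. This pair satisfies the actual linear transport system with three kinds of sources:
\begin{enumerate}
\item terms proportional to $B_n'$, which are $O(L_n^{-1})$ in $L^2$ by \eqref{eq:norms};
\item differences between the actual and reference coefficients, which tend to zero as $T\to\infty$ by the common tails in Proposition~\ref{prop:common};
\item nonlinear cross terms involving $\theta_0$, which decay exponentially because $\min(a,s)>k$.
\end{enumerate}
Gronwall's inequality with integrable coefficients gives an $L^2_v$ bound uniform in the cutoffs $R\to r_+$ and $Y\to r_-$. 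The feedback estimates are those already proved in Sec.~\ref{sec:common}: the redshift kernel \eqref{eq:redshift}, the early kernel \eqref{eq:early-kernel}, and the late kernels \eqref{eq:late-free}--\eqref{eq:late-feedback}. These bound the operators in $L^2$ independently of $V$, so the estimate follows the packet however late it arrives.

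The inflow at $v=T$ tends to the reference inflow by finite-time continuous dependence, since $f_n\to0$ in $L^p$; in particular $B_n\to0$ pointwise on any fixed window. Replacing the phase $d_R$ by $0$ costs $\lesssim\|B_n'\|_2$. Taking first $n\to\infty$, then $T\to\infty$, then the cutoffs to their limits gives \eqref{eq:radiation-limit} at the band radius.

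\textbf{Step 3: transfer to $I$.} Integrate $\theta_u=-\lambda\zeta/r$ horizontally from the matching curve to $u\in I$. By \eqref{eq:corner}, $-\lambda\lesssim x^{-\rho}e^{-2kv}$, so the kernel decays and bounds $\sup_I$ inside the time integral. The inner endpoint error vanishes because $(-F_0W)(Y)\to0$. On $[v_I,T]$, finite-time convergence combined with $B_n\to0$ on compacts handles the remaining interval. Signed sums follow by linearity of the comparison field, since the same estimates act on absolute values.

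\textbf{Main obstacle.} The hard part is Step 2's uniformity in time. The coefficient errors must be small in an operator sense on all of $[T,\infty)$ while $\|\theta_n\|$ stays of size $\mathcal B$, not small. I would control the quadratic terms through $D$ and $\kappa$ using the closure bound \eqref{eq:closure}, whose smallness is a positive power of $u$. The risk is that this power is not small at fixed $u\in I$. If so, I would instead use the geometric convergence of Proposition~\ref{prop:common}, which holds despite nonvanishing energy.
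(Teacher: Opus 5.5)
Your outline has the same architecture as the paper's proof: the integrable positive mode, a band comparison on $[T,\infty)$ whose Gronwall factor $e^{G_0/2}$ does not depend on the cutoffs, the inflow at $v=T$ removed by finite-time continuity and the reference decay $\min(a,s)>k$, the inner error controlled by $(-F_0W)(Y)\to0$, and late transverse transport. The gap is at the step you yourself call the main obstacle, where the sources you cite do not apply.

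The band comparison needs two things, uniformly in $n$ and in $v\ge T$ on a fixed band $[Y,R]$:
\begin{itemize}
\item the actual coefficients $A,B_c,c$ approach the static values $a/r$, $1/(ar)$, $-1/F_0$;
\item the exact normalizations $u\nu/F_0\to1/a$ and $z(r)/(c_\nu ue^{av})\to1$ hold.
\end{itemize}
The second item is also what fixes $C_h$ and the phase $d_R$ in your Step~1. None of your citations gives this:
\begin{itemize}
\item The common tails of Proposition~\ref{prop:common} hold on compact $I\subset(0,U]$ and compare with the zero-incoming solution. A fixed-radius band lies at $u\asymp e^{-av}\to0$ and must be compared with the static coefficients.
\item \eqref{eq:arrival} gives only two-sided bounds.
\item \eqref{eq:corner} and \eqref{eq:closure} concern the late region beyond $\gamma$.
\item The horizon data \eqref{eq:horizon} control only the regime $ue^{av}\ll1$, so they do not justify $u_R(v)\approx c_\nu^{-1}z(R)e^{-av}$ at a fixed radius.
\end{itemize}

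The paper closes this with a separate band-matching argument, \eqref{eq:geom-match}--\eqref{eq:eps}. Coarse decay on the band ($|\theta|\le Ce^{-h_cv}$, $u\asymp e^{-av}$, $|f_n|\le bA_*u^{-1/p}$) makes $\varpi-M$, $\log\kappa$, $\kappa_v/\kappa$ and $c_v$ small at late times. This holds regardless of the size of the weighted energy. The constants are then fixed in three moves:
\begin{itemize}
\item meet each ray at $ue^{aw}=\delta$, where the horizon asymptotics give $-\nu\approx c_\nu e^{aw}$ and $r_+-r\approx c_\nu\delta$;
\item integrate the exact identities \eqref{eq:ratio}--\eqref{eq:phase} over the bounded crossing time $a^{-1}\log(D/\delta)$;
\item let $v\to\infty$ and then $\delta\downarrow0$.
\end{itemize}

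Once $\epsilon_{R,Y}(T)\to0$ is available, your obstacle disappears. The error pair solves the actual linear band system, so coefficient errors enter only the Gronwall exponent and the sources \eqref{eq:residual}. Those sources carry the comparison field $H_n$ and contribute $C(\epsilon_{R,Y}(T)\sqrt{E_*}+\|B_n'\|_2)$. The $O(\mathcal B)$ size of $\theta_n$ never has to be small.

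Uniformity in the terminal time comes from the energy identity itself, not from the kernels of Sec.~\ref{sec:common}. Since $c=-1/\lambda>0$, you can discard the flux at $v=S$ while keeping the inflow at $v=T$. Your Step~2 never treats the $cZ_v$ term, and this is where it matters.

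Step~3 also needs no smallness at fixed $u\in I$. The delay $\alpha>1$ of the matching curve yields the gain $e^{-\xi w}$ with $\xi=k\beta/\alpha$. Bounded energies alone then give $\int e^{2\xi v}\mathcal D_n^2\dd v\le E_D$, hence a tail $E_De^{-2\xi T}$ uniform in $n$.
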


\begin{proof}
\textit{An integrable positive system.}
Let $z>0$ solve $z'/z=a/F_0$ with $z/(r_+-r)\to1$ at $r_+$,
and put $q_0=z^b$. In the inward variable $x=r_+-r$ solve
\begin{equation}\label{eq:volterra}
 X_x=\frac{a q_0 V}{r},\qquad V_x=\frac{X}{arq_0},\qquad
 (X,V)|_{x=0}=(0,r_+b),\qquad W=q_0V.
\end{equation}
The fourth root is $r_o=-(r_-+r_++r_c)$, and
$F_0(r)=-\Lambda\prod_{j\in\{o,-,+,c\}}(r-r_j)/(3r^2)$.
Partial fractions give the normalized solution explicitly:
\begin{equation}\label{eq:radial-coordinate}
 z(r)=(r_+-r)\exp\sum_{j\in\{o,-,c\}}
 \frac{a}{F_0'(r_j)}
 \left[\log|r-r_j|-\log|r_+-r_j|\right].
\end{equation}
The inner exponent is $-a/(2k)$. Thus $q_0=z^b$ satisfies
$q_0\sim(r_+-r)^b$ at $r_+$ and
$q_0\asymp(r-r_-)^{-1/2}$ at $r_-$.
Both $q_0$ and $q_0^{-1}$ are integrable at both endpoints because
$0<b<1$. Set $H(x)=\int_0^x[aq_0/r+(arq_0)^{-1}]\dd y$.
The Volterra map is a contraction with constant $1/2$ in the norm
$\sup_x e^{-2H(x)}\max(|X(x)|,|V(x)|)$, since
$\int_0^x H'e^{2H}\dd y=(e^{2H(x)}-1)/2$.
It gives a unique continuous, hence bounded, pair on the closed
interval. Iteration from zero preserves nonnegativity; the integral
equations then give $V\ge r_+b>0$ and $X>0$ inward from the outer
endpoint. In particular
\begin{align}\label{eq:mode-ends}
 &X(r)=\frac{k}{b+1}(r_+-r)^{b+1}[1+O(r_+-r)],\quad
 W(r)=r_+b(r_+-r)^b[1+O(r_+-r)],\nonumber\\
 &\mathcal T:=X(r_-)\in(0,\infty),\quad
 X(Y)=\mathcal T+O(\sqrt{Y-r_-}),\nonumber\\
 &W(Y)=O((Y-r_-)^{-1/2}),\quad
 (-F_0(Y))W(Y)=O(\sqrt{Y-r_-}).
\end{align}
The inner remainder follows by integrating $X_x=a q_0V/r$
over the last interval. Also
\begin{equation}\label{eq:mode}
 X_x=\frac ar W,\qquad
 W_x=\frac{k}{-F_0}W+\frac{1}{ar}X.
\end{equation}
These two equations retain the complete scalar feedback and determine
the positive transmission amplitude.

\textit{Matching the actual geometry on a fixed band.}
Fix $r_-<Y<R<r_+$. At late times the entire band lies in the
common development. Uniformly in $n$, its coefficients obey
\begin{equation}\label{eq:geom-match}
 \kappa\to1,\quad\varpi\to M,\quad
 \frac{u\nu}{F_0}\to\frac1a,\quad
 \frac{z(r)}{c_\nu u e^{av}}\to1.
\end{equation}
We record the identities that give the normalization and the
Jacobian control needed for a time $L^2$ estimate. For
$c=-1/\lambda$, differentiation at fixed $r$ or fixed $u$ gives
\begin{align}
 c_v|_r&=\frac{\kappa_v|_u/\kappa}{\lambda}
                    -\frac{\theta^2}{r\lambda^2},\label{eq:cv}\\
 \frac{\kappa_v|_u}{\kappa}
 &=\int_0^u\left[-\frac{2\zeta\theta}{r^2}
       -\frac{\zeta^2\lambda}{r^2\nu}
       -\frac{2\kappa K\zeta^2}{r\nu}\right]\dd x,\label{eq:kv}\\
 \partial_v|_u\log\frac{u\nu}{F_0}
 &=2\kappa(\varpi-M)
            \left[\frac1{r^2}+\frac{F_0'}{rF_0}\right],\label{eq:ratio}\\
 \partial_v|_u[\log z(r)-av]
 &=a(\kappa-1)-\frac{2a\kappa(\varpi-M)}{rF_0}.\label{eq:phase}
\end{align}
Equation \eqref{eq:kv} follows by differentiating the integrated
constraint.

Here is a direct estimate of the errors in \eqref{eq:geom-match}.
The fixed band has a finite crossing time $D_{R,Y}$, since
$|F_0|$ has a positive lower bound there.
Equations \eqref{eq:outer-coarse}, \eqref{eq:kappa-bound}, and
finite layer integration give
\begin{equation}\label{eq:band-bounds}
 |\theta|\le C_{R,Y}e^{-h_cv},\quad
 |\zeta|\le C_{R,Y}(|f_n|+e^{(a-h_c)v}),\quad
 |\varpi-M|+|\log\kappa|+|\kappa_v/\kappa|+|c_v|\to0.
\end{equation}
The quadratic terms use \eqref{eq:data-bounds} or
$|f_n|\le bA_*u^{-1/p}$ and $u\asymp e^{-av}$.
To fix the constants in \eqref{eq:ratio}--\eqref{eq:phase},
choose a small fixed $\delta>0$ and meet the ray at
$w=a^{-1}\log(\delta/u)$. The outer estimates and
$-\nu_H\sim c_\nu e^{aw}$ give, uniformly on $ue^{aw}=\delta$,
\[
 \frac{-\nu(u,w)}{c_\nu e^{aw}}=1+O(\delta)+o_{w\to\infty}(1),\qquad
 \frac{r_+-r(u,w)}{c_\nu\delta}=1+O(\delta)+o_{w\to\infty}(1).
\]
The second identity follows by integrating $r_u=\nu$ from the
horizon to the chosen ray; the horizon radius error divided by the fixed
$c_\nu\delta$ tends to zero. Since $F_0(r)\sim a(r-r_+)$ and
$z(r)\sim r_+-r$, the sum of the two logarithmic normalization
errors at $w$ is at most $\omega(\delta)+o_{w\to\infty}(1)$,
where $\omega(\delta)\to0$.

The arrival bounds give fixed $j,D>0$ with $j\le ue^{av}\le D$
throughout the band. Choose $\delta<j$ and
$2c_\nu\delta<r_+-R$. Then
$0<v-w\le a^{-1}\log(D/\delta)=:L_\delta$.
Before the ray reaches the inner edge of the band, the mass and lapse
estimates give $|\varpi-M|+|\kappa-1|\le Ce^{-\gamma t}$ for
some $\gamma>0$. At sufficiently late $w$, the section radius lies
between $R$ and $Z_\delta=r_+-c_\nu\delta/4$, and
$F(r,\varpi)<0$ there. The outgoing constraint preserves
$\lambda<0$, so the segment to $v$ stays in $[Y,Z_\delta]$.
Integrating \eqref{eq:ratio}--\eqref{eq:phase} on this compact
interval bounds the additional normalization error by
$C_\delta L_\delta e^{-\gamma w}$. First let $v\to\infty$
at fixed $\delta$, and then let $\delta\downarrow0$.
This proves \eqref{eq:geom-match}, with all cutoffs chosen before
the incoming profile. Together with \eqref{eq:band-bounds}, it
gives a bound $\epsilon_{R,Y}(T)$ for the normalized coefficient
errors and $c_v|_r$ on $v\ge T$, satisfying
\begin{equation}\label{eq:eps}
 \epsilon_{R,Y}(T)\longrightarrow0\qquad(T\to\infty).
\end{equation}
The arrival phase $\tau(u_R(v))=av+d_h-\log z(R)+e_R(v)$
has $e_R,e_R'\to0$ as well, since
$-u_R'/u_R=\lambda/(u_R\nu)$.

\textit{The energy estimate for a slow envelope.}
Use the total scalar field on each actual metric. In the band
coordinate $x=R-r$ put
\begin{equation}\label{eq:band-variables}
 \Theta=e^{kv}\theta,\quad Z=e^{kv}u\zeta,\quad
 A=\frac{\lambda}{ru\nu},\quad B_c=\frac{u\nu}{r\lambda},
 \quad c=-\lambda^{-1}.
\end{equation}
The equations are exactly
\begin{equation}\label{eq:band-system}
 \Theta_x=AZ,\qquad Z_x+cZ_v=kcZ+B_c\Theta.
\end{equation}
For the limiting geometry $A_0=a/r$, $B_0=1/(ar)$,
$c_0=(-F_0)^{-1}$. The same rescaling as in \eqref{eq:volterra},
$\widehat Z=Z/q_0$, removes the singular reference diagonal:
\begin{equation}\label{eq:rescaled-band}
 \Theta_x=Aq_0\widehat Z,\qquad
 \widehat Z_x+c\widehat Z_v=(B_c/q_0)\Theta+k\delta c\widehat Z,
 \qquad \delta c=c-c_0.
\end{equation}
Compare with $(X,V)H_n(v)$, where
$H_n=C_hB_n(av+d_R)$ and $d_R=d_h-\log z(R)$, and
write the errors as $p_1=\Theta-XH_n$, $q_1=\widehat Z-VH_n$.
They satisfy the same system with the added sources
\begin{equation}\label{eq:residual}
 S_1=(A-A_0)q_0VH_n,\qquad
 S_2=\left(\frac{B_c-B_0}{q_0}X+k\delta c V\right)H_n
                      -acVC_h B_n'(av+d_R).
\end{equation}
Fix a late $T$ and set
$\mathcal N=\|p_1\|_{L^2(T,S)}^2+\|q_1\|_{L^2(T,S)}^2$.
Integration by parts gives
\begin{align}\label{eq:energy-halfline}
 \mathcal N'={}&2\int_T^S(Aq_0+B_c/q_0)p_1q_1\dd v
                  +\int_T^S(2k\delta c+c_v)q_1^2\dd v\nonumber\\
 &+c(x,T)|q_1(x,T)|^2-c(x,S)|q_1(x,S)|^2\nonumber\\
 &+2\langle S_1,p_1\rangle+2\langle S_2,q_1\rangle.
\end{align}
Discard the nonpositive upper flux and use $2|p_1q_1|\le p_1^2+q_1^2$.
The reference growth has a finite integral on the entire radius interval:
\begin{equation}\label{eq:integrable-growth}
 G_0=\int_{r_-}^{r_+}\left(\frac{aq_0}{r}+\frac{1}{arq_0}\right)\dd r<\infty.
\end{equation}
Gronwall's inequality, followed by $S\to\infty$, yields
\begin{equation}\label{eq:band-energy}
 \mathcal N(x)^{1/2}\le C e^{G_0/2+\int_0^x e(y)\dd y}
 \left[\mathcal N(0)^{1/2}+\int_0^x(\|S_1\|_2+\|S_2\|_2)\dd y
       +\left(\int_0^x c(y,T)|q_1(y,T)|^2\dd y\right)^{1/2}\right],
\end{equation}
where $C$ is absolute and
$e=(q_0\|A-A_0\|_\infty+q_0^{-1}\|B_c-B_0\|_\infty)/2
+k\|\delta c\|_\infty+\|c_v\|_\infty/2$.
The positive bottom inflow is retained. After the coefficient errors
vanish on each fixed band, the factor $Ce^{G_0/2}$ is independent
of both radius cutoffs.

At fixed $R,Y,T$, the bottom term converges to that of the
zero-incoming reference. Indeed the packet vanishes on the fixed
history, and finite-time continuous dependence gives convergence
of the geometry and its radius coordinates. On this positive
$u$ interval, $f_n$ also vanishes eventually, so the vertical
scalar integral gives convergence of the transverse trace.
Denote the reference quantities by $c_h$ and $\widehat Z_h$.
Then
\begin{equation}\label{eq:bottom-limit}
 \int_0^{R-Y}c_n(x,T)|q_{1,n}(x,T)|^2\dd x
 \longrightarrow
 \int_0^{R-Y}c_h(x,T)|\widehat Z_h(x,T)|^2\dd x
 \le C_{R,Y}e^{-2\gamma T},\qquad \gamma>0.
\end{equation}
To obtain the last bound, choose $k<h_*<\min(a,s)$.
For $f=0$ the scalar integral estimates before arrival at the inner
band boundary have no incoming H\"older term. The same spatial Gronwall
argument as in the common construction, using $ue^{av}\le D_Y$,
gives $|\phi_v|\le Ce^{-h_*v}$ and
$|\zeta|\le Ce^{(a-h_*)v}$.
Thus $|\Theta_h|+|\widehat Z_h|\le C_{R,Y}e^{-\gamma v}$
with $\gamma=h_*-k$, since $u\asymp e^{-av}$ and $q_0$ is
bounded away from zero on the fixed band. Also $-\lambda_h$
is bounded away from zero there. This proves
\eqref{eq:bottom-limit}; its reference contribution is removed
by taking $T\to\infty$ after the sequence limit.
The source term is bounded by
$C_{R,Y}(\epsilon_{R,Y}(T)\sqrt{E_*}+\|B_n'\|_2)$.

It remains to specify the outer data in the same normalization.
Put $\delta_+=r_+-R$ and
$(J_cB)(t)=c\int_0^\infty e^{-cz}B(t+z)\dd z$.
The leading horizontal integral in the thin outer region is
$a c_\nu(r_\bullet/r_+)b e^{(a+k)v}
 \int_0^{u_R(v)}x^b B_n(\tau(x))\dd x$.
The substitution $x=u_R e^{-z}$ gives $J_{b+1}$ and
the leading coefficient $C_h k\delta_+^{b+1}/(b+1)$ in
\eqref{eq:mode-ends}. The direct vertical trace gives
$C_h r_+b\delta_+^b$. The relative radius and coefficient errors
tend to zero. After division by $q_0(R)$, the transverse feedback
has a squared error bounded by a constant times
$\delta_+^{2(1-b)}(1+E_*)$, which tends to zero because $b<1$.
The prescribed outgoing tail is retained in the transport integrals;
its conjugated late energy tends to zero since $s>k$.
The earlier vertical history is controlled by the redshift kernel
\eqref{eq:redshift}, whose rate is $a-k>0$.
The phase error $e_R(v)$ in \eqref{eq:geom-match} has a bounded
time Jacobian. For its action on a packet the fundamental theorem
of calculus bounds the $L^2$ error by
$C\|e_R\|_\infty\|B_n'\|_2$.
Finally $\|J_{b+1}B_n-B_n\|_2\le(b+1)^{-1}\|B_n'\|_2$.
Together these bounds give the following useful form of the outer
estimate. For any $\varepsilon>0$, one can choose $R$ arbitrarily
close to $r_+$ and then a common late $T$ such that
\begin{equation}\label{eq:outer-traces}
 \|\Theta_{R,n}-X(R)H_n\|_{L^2(T,\infty)}^2
 +\|\widehat Z_{R,n}-V(R)H_n\|_{L^2(T,\infty)}^2
 \le\varepsilon(1+E_*)+C_{\varepsilon,R}\|B_n'\|_2^2.
\end{equation}
The additive term includes outgoing radiation. The constants
are uniform in $n$ and the finite terminal time.

Apply \eqref{eq:band-energy} at fixed $Y,R$ and take the sequence
limit, followed by $T\to\infty$. Equations
\eqref{eq:bottom-limit} and \eqref{eq:eps} remove the bottom and
coefficient errors. Letting the outer accuracy $\varepsilon$ tend
to zero, with $R\uparrow r_+$, gives an arbitrarily small total-field
error on a sufficiently late half-line at $Y$. The propagation
factor tends to $Ce^{G_0/2}$, independently of $R$.
Auxiliary estimates may use shorter incoming initial segments.
Uniqueness identifies the resulting solutions with restrictions of
the original development, and each fixed radius curve eventually
lies inside their domains.
The common $U$ is therefore fixed throughout this argument.

Subtract the reference trace, whose late $L^2$ norm tends to zero
by the same $h_*$ estimate. On any fixed history interval the
actual traces converge to their reference counterparts, while
the packet eventually vanishes. A fixed phase change costs at
most its length times $\|B_n'\|_2$, so $d_R$ can be replaced by
$d_h$. Combining the finite history and late tail proves
\begin{equation}\label{eq:band-limit}
 \|\Theta_{Y,n}-\Theta_{Y,0}-C_hX(Y)B_n(av+d_h)\|_{L^2_v}
 +\|\widehat Z_{Y,n}-\widehat Z_{Y,0}-C_hV(Y)B_n(av+d_h)\|_{L^2_v}
 \longrightarrow0.
\end{equation}
Here the time norm starts at any fixed common time at which the
radius curves exist, and each trace is evaluated on its own
development's $Y$-radius curve. The argument uses bounded envelope
energy, vanishing derivative energy and incoming $L^p$ norm, and
escape from fixed history intervals. No bound on $t_n/L_n$ enters.

\textit{Removing the inner cutoff.}
Denote the construction radius of Proposition~\ref{prop:common} by
$Y_0$ and, for each development, its already chosen matching curve
by $\gamma_0$. Keep these curves and the common $U$ fixed.
Let an auxiliary
$Y=r_-+\delta_-$ approach the inner root.
At a sufficiently late time depending on this fixed $Y$, the
region between $Y$ and $\gamma_0$ is contained in the original
early region. The damping argument \eqref{eq:damping}, integrated
from this new radius, and \eqref{eq:arrival} give
$\Lambda_Y\le L\delta_-$ and $-u_Y'/u_Y\le2a$, with $L$
chosen on the fixed root neighborhood.
Write $\mathcal R_n(u,v)=e^{kv}\theta_n(u,v)$,
$\widehat Z_{Y,n}=e^{kv}u_Y\zeta_n/q_0(Y)$, and
$\mathcal D_{Y,n}(v)=\sup_{u_Y(v)\le u\le u_{\gamma_0}(v)}
|\mathcal R_n(u,v)-\mathcal R_n(u_Y(v),v)|$.
The fixed-band argument with the actual reference traces subtracted
gives the mode coefficients $C_hX(Y)$ and $C_hV(Y)$; the reference
traces are in $L^2$. Put $r_{\min}=r_-/2$, $N=2\exp(2)$,
$C_Y=\exp(33/16)(L\delta_-)^2$, and
$n_Y=\exp(-1)[-F_0(Y)]/a$.
The restarted damping and the same two kernels give
\begin{align}\label{eq:total-inner-difference}
 \int_{b_0}^\infty\mathcal D_{Y,n}^2\dd v
 &\le 2\left(\frac{C_Yq_0(Y)}{r_{\min}k n_Y}\right)^2
       \int_l^\infty\widehat Z_{Y,n}^2\dd v
   +2\left(\frac{C_YN}{4k^2r_{\min}^2}\right)^2 E_{\rm tot},
\end{align}
where $E_{\rm tot}$ bounds $\int_0^\infty\sup_{0\le u\le U}
|\mathcal R_n(u,v)|^2\dd v$ and $b_0/\alpha\ge l$ is sufficiently
late at this fixed $Y$. The source and observation cutoffs are
independent. First take the sequence limit and remove the reference
tail by increasing $l$. Since
$[\delta_-^2q_0(Y)/(-F_0(Y))]^2=O(\delta_-)$ and $V$ is bounded,
the right side has upper limit at most
$C\delta_-E_*+C\delta_-^4E_{\rm tot}$.
Continuity of $X$ at $r_-$ then yields
\begin{equation}\label{eq:gamma-total-tail}
 \lim_{T\to\infty}\limsup_n\int_T^\infty
 |\mathcal R_n(u_{\gamma_0}(v),v)-C_h\mathcal T B_n(av)|^2\dd v=0.
\end{equation}
The time cutoff may depend on $Y$, while $U$ and $\gamma_0$ stay fixed.

\textit{Propagation to a fixed ray.}
Let $\mathcal D_n(v)$ be the supremum of
$|\mathcal R_n(u,v)-\mathcal R_n(u_{\gamma_0}(v),v)|$
over $u_{\gamma_0}(v)\le u\le U$, and set it to zero before
the curve enters this interval. Retain the signed horizontal
increment when applying \eqref{eq:late-free}--\eqref{eq:late-feedback}.
With $A_n(w)=|Z_{\gamma,n}(w/\alpha)|$, for
$v\ge v_{\gamma_0}(U)$ this gives
\[
 \mathcal D_n(v)\le
 C_A\int_{v_{\gamma_0}(U)}^v e^{-k(v-w)}e^{-\xi w}A_n(w)\dd w
 +C_M\int_{v_{\gamma_0}(U)}^v e^{-k(v-w)}e^{-\chi w}M_{l,n}(w)\dd w.
\]
Let $E_\gamma$ and $E_l$ bound the squared $L^2$ norms of
$Z_{\gamma,n}$ and $M_{l,n}$ on their respective entered regions,
as established in Sec.~\ref{sec:common}, uniformly in the terminal
time and the profile. Time dilation then bounds the squared source
norms by $\alpha E_\gamma$ and $E_l$.
Since $0<\xi<k$ and $\chi=3\xi$, the identity
\[
 e^{\xi v}e^{-k(v-w)}e^{-dw}
 =e^{-(k-\xi)(v-w)}e^{-(d-\xi)w},\qquad d\in\{\xi,\chi\},
\]
and the causal $L^2$ bound give, for every finite $S\ge0$,
\begin{equation}\label{eq:late-difference-weighted}
 \int_0^S e^{2\xi v}\mathcal D_n(v)^2\dd v
 \le \frac{2C_A^2\alpha E_\gamma+2C_M^2E_l}{(k-\xi)^2}
 =:E_D.
\end{equation}
Before first entry the integrand vanishes. Exhaustion in $S$ thus
gives $\int_T^\infty\mathcal D_n^2\dd v\le E_De^{-2\xi T}$.
Only the difference receives the exponential weight; both sources
use the energy bounds already proved.

For a fixed compact $I\subset(0,U]$, the scale bound
$u_{\gamma_0}(v)\le D_r e^{-av/\alpha}$ places the curve below
$I$ at a time independent of $n$. Combining this difference tail
with \eqref{eq:gamma-total-tail} gives the same comparison on
$I$ in $L^2_vC(I)$. The reference radiation has a vanishing $L^2$
tail by \eqref{eq:energy-bound}. On every fixed history interval,
finite-time continuous dependence gives
$\mathcal R_n-\mathcal R_0\to0$ uniformly on $I$, while the
packet eventually vanishes there. First take $n\to\infty$ on
that interval, and then increase its endpoint. This proves
\eqref{eq:radiation-limit}, with the common region and energy
budget fixed throughout.
\end{proof}

%% file: sections/response.tex
\section{Mass accumulation and weak extension}\label{sec:response}
Write the zero-incoming boundary geometry as
\begin{equation}\label{eq:boundary-values}
 R_h(u)=r_0(u,\infty),\qquad N_h(u)=\nu_0(u,\infty),\qquad
 A_h(u)=\lim_{v\to\infty}e^{2kv}\Omega_0^2(u,v).
\end{equation}

\begin{proposition}\label{prop:mass}
On each positive compact interval $I\subset(0,U]$,
$R_h,A_h>0$ and $N_h<0$. For a single packet the mass difference
satisfies, uniformly for $u\in I$ and $v\in[v_I,\infty]$,
\begin{equation}\label{eq:mass-profile}
 \varpi_n(u,v)-\varpi_0(u,v)
 =\mathcal A_h(u)F_\psi((av-t_n)/L_n)+o_I(1),\qquad
 \mathcal A_h(u)=\frac{-2N_h(u)}{aA_h(u)}C_h^2\mathcal T^2E>0,
\end{equation}
where $F_\psi(y)=M_2^{-1}\int_{-\infty}^y\psi^2$.
For a finite signed sum $B_n$ the corresponding formula is
\begin{equation}\label{eq:sum-profile}
 \varpi_n(u,v)-\varpi_0(u,v)
 =\Gamma_h(u)\int_{-\infty}^{av} B_n(t)^2\dd t+o_I(1),
 \qquad \Gamma_h(u)=\frac{-2N_h(u)}{aA_h(u)}C_h^2\mathcal T^2.
\end{equation}
Each development, including the reference, has a local weak
extension across every horizon point $0<u<U$ with continuous
$W^{1,2}_{\rm loc}$ metric, $L^2_{\rm loc}$ connection, and
$H^1_{\rm loc}$ scalar.
The boundary is locally the future Cauchy horizon of the original
two characteristic initial hypersurfaces.
\end{proposition}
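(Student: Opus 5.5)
The proof has three parts: positivity of the boundary geometry, integration of the exact mass flux against the radiation limit of Proposition~\ref{prop:radiation}, and weak extension through a regular null coordinate.

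\emph{Boundary geometry.} By Proposition~\ref{prop:common}, $r_0$, $\cA_0$ and $\nu_0$ converge uniformly on the compactified strip $I\times[v_I,\infty]$. Hence $R_h$, $A_h$, $N_h$ exist and are continuous. The common positive lower bounds give $R_h\ge r_*>0$ and $A_h>0$.

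For $N_h<0$ I would use the transverse constraint $\nu_u=\ell_u\nu-r\phi_u^2$ on the reference solution, where $f=0$. Along $u=0$, the positive tail $h$ feeds $\zeta_0$ through $\zeta_v=-\nu\theta/r$, with $-\nu>0$ and $\theta=h>0$. This makes $\zeta_0(u,\infty)$ positive on $(0,U]$. Since $\nu_0(0,\cdot)<0$ and the $\ell_u\nu$ term is linear in $\nu$, it cannot change its sign. The source $-r\zeta_0^2/r^2$ is strictly negative. Together these give $N_h<0$.

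\emph{Mass profile.} I integrate \eqref{eq:flux} in $v$ from $v_I$:
\[
 \varpi_n(u,v)-\varpi_n(u,v_I)=\int_{v_I}^v\frac{-2\nu_n}{\cA_n}\,e^{2kw}\theta_n^2\dd w .
\]
By finite-time continuous dependence, $\varpi_n(u,v_I)-\varpi_0(u,v_I)\to0$.

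Write $g_n=-2\nu_n/\cA_n$. By Proposition~\ref{prop:common}, $g_n\to g_0$ uniformly on the compactified strip, and $g_0(u,\infty)=-2N_h/A_h$. Expand $e^{2kw}(\theta_n^2-\theta_0^2)=2\Theta_0\Delta+\Delta^2$, where $\Delta=e^{kw}(\theta_n-\theta_0)$ and $\Theta_0=e^{kw}\theta_0$.

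For the cross term, split at a late time $T$. On $[v_I,T]$, local convergence gives $\Delta\to0$ uniformly. On $[T,\infty)$, Cauchy--Schwarz with the energy bound \eqref{eq:energy-bound} controls it by $\|\Theta_0\|_{L^2(T,\infty)}$, which tends to zero as $T\to\infty$. Throughout, the $C^0(I)$-valued $L^2_v$ norm gives uniformity in $u$.

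For the square term, \eqref{eq:radiation-limit} and strong $L^2$ convergence replace $\Delta^2$ by $C_h^2\mathcal T^2B_n(aw)^2$ with an $o(1)$ error in $L^1_vC^0(I)$. The coefficient can then be frozen: $B_n$ escapes every fixed window, and the tail of $g_n-g_0(\cdot,\infty)$ is uniformly small. This yields $\frac{-2N_h}{A_h}C_h^2\mathcal T^2\int_{v_I}^v B_n(aw)^2\dd w$. The substitution $t=aw$, together with the fact that the support of $B_n$ lies beyond $av_I$, gives \eqref{eq:sum-profile}. For a single packet, $\int_{-\infty}^{av}B_n^2=E\,F_\psi((av-t_n)/L_n)$, which is \eqref{eq:mass-profile}. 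All errors are uniform in the upper endpoint $v\in[v_I,\infty]$ because the bounds are on the full half-line.

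\emph{Weak extension.} I introduce $V$ with $\dd V/\dd v=\Omega^2(U,v)$ and $V(\infty)=0$. By \eqref{eq:corner}, $\Omega^2(U,v)\asymp e^{-2kv}$, so $V\asymp -e^{-2kv}$ near the horizon. Then $\partial_V\phi=\theta/(r\Omega^2(U,\cdot))$, and $\int|\partial_V\phi|^2\dd V\asymp\int e^{2kv}\theta^2\dd v<\infty$ by \eqref{eq:energy-bound}. In the new coordinate:
\begin{itemize}
\item $\Omega^2\dd u\dd v$ becomes $(\cA/\cA(U,\cdot))\dd u\dd V$, which is continuous up to $V=0$ by the tails \eqref{eq:tails-geometry}.
\item $r$, $\nu$ and $\partial_Vr=-D\cA/\cA(U)$ are continuous up to $V=0$.
\item $\partial_V\log\widetilde\Omega^2$ involves $\ell_v+2k$ divided by $e^{-2kv}$, so I expect only $L^2$ control there, via the conformal wave equation integrated in $u$.
\end{itemize}
This gives continuous traces of $(r,\widetilde\Omega^2,\phi)$ on $V=0$, with $\phi$ in $H^1$. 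I then pose characteristic data on $V=0$ with an arbitrary compatible $H^1$ continuation on the other side, and solve with the finite-energy local theory of Sec.~\ref{sec:local} (cf.~\cite{GajicLuk2019}). This yields the extension with continuous $W^{1,2}_{\rm loc}$ metric, $L^2_{\rm loc}$ Christoffel symbols and $H^1_{\rm loc}$ scalar. The horizon is the future boundary of the domain of dependence because $v\to\infty$ along every ray.

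\emph{Main obstacle.} The hardest step is showing that $\partial_V\ell$ is in $L^2$, which requires quantitative decay of $\ell_v+2k$ beyond \eqref{eq:late-conformal}. I would first try to bound it by the scalar energy through $\ell_{uv}=-2\phi_u\phi_v+\dots$, integrating in $u$ from $\gamma$.
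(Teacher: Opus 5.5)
Your mass-profile argument is essentially the paper's: the same decomposition, the split at a late $T$ for the cross term, strong $L^2_vC(I)$ convergence for the square, and freezing the coefficient on the late interval. The claim $R_h,A_h>0$ is immediate, as you say.

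The genuine gap is the strict inequality $N_h<0$, on which $\mathcal A_h>0$ rests. You pass from $\theta_0=h>0$ on $u=0$ to $\zeta_0(u,\infty)>0$ for $u>0$. But $\zeta_{0,v}=-\nu_0\theta_0/r_0$ at fixed $u>0$ requires the sign of $\theta_0(u,v)$ for all $v$, and this is not automatic. Near $u=0$ the solution is nontrapped ($\lambda_H>0$ by \eqref{eq:lambda-h}). There $\theta_{0,u}=-\lambda\zeta_0/r$ is negative wherever $\zeta_0>0$, so the coupling erodes the sign of $\theta_0$.

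The missing step, as in the paper, has four parts. First bound this negative correction by $Ce^{-(2s+h_c)v}$, using \eqref{eq:lambda-h} and \eqref{eq:outer-coarse}. Then compare it with the lower bound $h\ge c_He^{-lv}$ through $l<2s+h_c$. Next keep $\theta_0>0$ on the remaining finite history by shrinking the common width. Finally, note that for $\lambda<0$ both transports preserve signs. Your proposal never uses the lower bound on $h$ or the hypothesis $l<2s+k$, and this is exactly where they are needed.

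You also need the existence of $\zeta_{0,\infty}$, which follows from integrability of $\zeta_{0,v}$ via the energy bound. Strictness must then be taken in the limit identity $N_h(u)/A_h(u)=N_h(d)/A_h(d)-\int_d^u\zeta_{0,\infty}^2/(R_hA_h)\,\mathrm{d}x$ with $d<\inf I$. Sign preservation of $\nu_0$ at each finite $v$ only gives $N_h\le0$.

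There are three further points in the extension part.

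First, your stated obstacle disappears with your own normalization. Since $\partial_v\log\widetilde\Omega^2=\ell_v(u,v)-\ell_v(U,v)$, the $2k$ cancels. Integrate $\ell_{uv}$ horizontally from $U$, using $|\ell_{uv}|\le C_I(e^{-kv}Z+e^{-2kv})$ with $Z=e^{kv}\sup_I|\theta|$. This gives $\partial_V\log\widetilde\Omega^2\in L^2_VC(I)$ directly from the radiation energy. Integrating from $\gamma$ instead reintroduces $|\ell_v+2k|$ on $\gamma$. That quantity is only known to decay at some small rate $\eta_\gamma$, not faster than $e^{-kv}$, which square integrability against $e^{2kv}$ would require.

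Second, gluing a future solution with data posed on $V=0$ is a workable alternative to the paper's crossing rectangle. The paper keeps the actual data on $u=u_a$, continues it past $V=0$, places the second side at $V_j<0$, and identifies the result by local uniqueness. Your route additionally requires continuous limits of $\phi_u,\nu,\ell_u$ at $V=0$, so that the traces are admissible data satisfying the $u$ constraint.

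Third, the observation that $v\to\infty$ along every ray does not establish the local Cauchy-horizon identity for a merely continuous metric. You must show two things:
\begin{itemize}
\item every past-inextendible Lipschitz causal curve from a point with $V<0$ meets the original initial set;
\item points with $V\ge0$ admit past-inextendible causal curves avoiding that set.
\end{itemize}
The paper does this with a past collar and a monotonicity and compactness argument.
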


\begin{proof}[Proof of the mass laws.]
\textit{The positive geometric factor.}
Proposition~\ref{prop:common} gives the positive limits of $r_0$
and $\cA_0$. To establish $N_h<0$, it suffices to consider the
reference scalar. In the small nontrapped part the only negative
correction to $\theta_0=h$ is the horizontal integral with
$\lambda>0$. Equations \eqref{eq:lambda-h} and
\eqref{eq:outer-coarse} bound this correction by
$Ce^{-(2s+h_c)v}$.
Since $l<2s+h_c$, it is eventually smaller than $h/2$.
On the remaining fixed history interval, continuity and a sufficiently
small common width keep $\theta_0>0$. Vertical integration then
gives $\zeta_0(u,v)>0$ for $u,v>0$. In the trapped region both
scalar transports preserve these signs.

On a positive interval the bounded mass and \eqref{eq:flux} give
limits of $\zeta_0$ by its integrable vertical derivative.
Taking the boundary limit in the integrated transverse constraint
gives, for any $0<d<u$,
\begin{equation}\label{eq:strict-negative}
 \frac{N_h(u)}{A_h(u)}
 =\frac{N_h(d)}{A_h(d)}
 -\int_d^u\frac{\zeta_{0,\infty}(x)^2}{R_h(x)A_h(x)}\dd x<0.
\end{equation}
Here $N_h(d)\le0$ and the integrand is strictly positive. Choose
$d<\inf I$ to obtain a strictly negative bound on $I$.
This is the role of the nonzero outgoing tail in the mass response.

\textit{Strong convergence of the mass density.}
Set $\Theta_n=e^{kv}\theta_n$ and
$C_n(u,v)=-2\nu_n(u,v)/\cA_n(u,v)$.
The mass equation is $\varpi_{n,v}=C_n\Theta_n^2$.
The $C_n$ are uniformly bounded and converge to $C_0$ on the
compactified strip. Their late limits converge to
$C_\infty(u)=-2N_h(u)/A_h(u)$, with common tails.
Writing $\Delta\Theta_n=\Theta_n-\Theta_0$ gives the exact
decomposition
\begin{equation}\label{eq:density-decomposition}
 C_n\Theta_n^2-C_0\Theta_0^2
 =C_n(\Delta\Theta_n)^2+2C_n\Theta_0\Delta\Theta_n
                              +(C_n-C_0)\Theta_0^2.
\end{equation}
Proposition~\ref{prop:radiation} and the uniform $L^2_vC(I)$
bound imply
\begin{equation}\label{eq:square-limit}
 \| (\Delta\Theta_n)^2-C_h^2\mathcal T^2B_n(av)^2
                                    \|_{L^1_vC(I)}\to0.
\end{equation}
The product with $\Theta_0$ tends to zero after splitting at a
fixed $T$: use local convergence on $[v_I,T]$ and the fixed
reference $L^2$ tail on $[T,\infty)$. The last term of
\eqref{eq:density-decomposition} tends to zero by uniform
coefficient convergence. Finally replace $C_n$ by $C_\infty$ in
the leading square. On fixed histories $B_n(av)$ vanishes
eventually; on the late interval the coefficient difference is
bounded by $o_n(1)+C_Ie^{-\eta_I T}$. Since the total envelope
energy is bounded, this proves
\begin{equation}\label{eq:mass-density}
 \|\partial_v(\varpi_n-\varpi_0)
       -C_\infty C_h^2\mathcal T^2B_n(av)^2\|_{L^1_vC(I)}\to0.
\end{equation}
The starting mass difference at $v_I$ tends to zero. Integrating
an $L^1$ error controls every upper endpoint, including infinity.
The substitution $t=av$ proves \eqref{eq:sum-profile}, and
\eqref{eq:packets} gives \eqref{eq:mass-profile}.
The error tends to zero at fixed finite energy $E$.

\end{proof}

\subsection{Delay, duration, and two packets}
For a single packet, the terminal excess converges to the positive
$\mathcal A_h$ on $I$, whereas every fixed finite-time excess
tends to zero. This proves the two noncommuting limits in the
Letter. Define $v_{q,n}(u)$ as the first time $v\ge v_I$ at
which the excess reaches $q$ times its terminal excess.
Let $q$ range in a compact subinterval $[q_0,q_1]\subset(0,1)$.
Continuity of the excess and its positive terminal limit give
these first crossings for all sufficiently large $n$, uniformly
in $u$ and $q$. Evaluating \eqref{eq:mass-profile} there and using
terminal convergence gives
$F_\psi((av_{q,n}-t_n)/L_n)=q+o_I(1)$, since
$\min_I\mathcal A_h>0$.
As $\psi>0$ on $(0,1)$, $F_\psi$ maps $[0,1]$ continuously
and strictly increasingly onto $[0,1]$. The compact range of $q$
places $(av_{q,n}-t_n)/L_n$ in $(0,1)$ for large $n$.
Uniform continuity of the inverse gives, uniformly in $u$ and $q$,
\begin{equation}\label{eq:quantiles}
 v_{q,n}(u)=\frac{t_n+L_nF_\psi^{-1}(q)}a+o_I(L_n).
\end{equation}
Uniform approximation by $F_\psi$ locates the first crossing even
when the exact excess, a difference of two increasing masses, varies
nonmonotonically.

For two envelopes use $B_n=B_{1,n}+\sigma B_{2,n}$,
$\sigma=\pm1$, in \eqref{eq:data} with its full square.
The bound $\|B_n\|_2\le\sqrt{E_1}+\sqrt{E_2}$ supplies a
common energy budget. Its $L^p$ norm and derivative $L^2$ norm
tend to zero, and both supports escape every finite interval.
Thus \eqref{eq:sum-profile} proves the two-packet law.
At equal component energies $E_0$, common width $L_n$, and
separation $\delta L_n$, it becomes
\begin{equation}\label{eq:pair-shape}
 \frac{\varpi_n-\varpi_0}{2E_0\Gamma_h}
 =G_{\delta,\sigma}\left(\frac{av-t_n}{L_n}\right)+o_I(1),
 \qquad G_{\delta,\sigma}(y)=\frac1{2M_2}
 \int_{-\infty}^y[\psi(z)+\sigma\psi(z-\delta)]^2\dd z.
\end{equation}
The fixed component energies allow the energy of the combined
input to vary with overlap. For identical opposite packets the
total initial data equal the reference exactly, and uniqueness
gives zero response for every $n$.

\subsection{Weak extension and the original initial surface}\label{sec:extension}
\begin{proof}[Proof of the extension assertions in Proposition~\ref{prop:mass}.]
\textit{Regular coordinates.}
Fix one development and $I=[d,U]$ with $d>0$. Set
\begin{equation}\label{eq:regular-chart}
 a_U(v)=\Omega^2(U,v),\quad
 V(v)=-\int_v^\infty a_U(w)\dd w,\quad
 H(u,V)=\Omega^2(u,v)/a_U(v),\quad \widetilde\ell=\log H.
\end{equation}
Then $a_U\asymp e^{-2kv}$ and $H$ has a positive continuous
limit at $V=0$. For this one solution the radiation energy has
a vanishing tail. Since $D_v=\theta^2/(r\Omega^2)$,
$D$ has a continuous finite limit uniformly on $I$, bounded away
from zero; hence $q$ does too. Thus $r_V=-HD$ is continuous.
Bounded mass and the scalar equations give
\begin{align}\label{eq:extension-estimates}
 &|\zeta_v|\le C_I e^{-kv}Z(v),\qquad
 |\nu_v|\le C_Ie^{-2kv},\nonumber\\
 &|\ell_{uv}|\le C_I(e^{-kv}Z(v)+e^{-2kv}),
 \qquad Z(v)=e^{kv}\sup_{u\in I}|\theta(u,v)|\in L^2.
\end{align}
Here $\zeta$ is bounded on $I$ by vertical integration from a
finite time. Since each $f_n$ is smooth on $I$, the same estimates
give continuous limits of $\phi_u,\nu,\ell_u$.
Horizontal integration of the last bound controls
$\ell_v(u,v)-\ell_v(U,v)$, the numerator of
$\widetilde\ell_V$. For $\Psi=(r,\phi,\widetilde\ell)$,
\begin{equation}\label{eq:extension-norm}
 \Psi_u\in C^0,\qquad
 \int_{V(T)}^0\|\Psi_V(\cdot,V)\|_{C(I)}^2\dd V
 \le C_I\int_T^\infty(Z(v)^2+e^{-2kv})\dd v<\infty.
\end{equation}

To construct an extension near $u_0$, choose $u_a<u_0$ nearby.
Keep the actual data on $u=u_a$, $V<0$. For $V\ge0$ continue
$\phi$ and $H$ constantly from their boundary values and set
$r=r(u_a,0)+r_V(u_a,0)V$. On this ray, set $D=-r_V/H$.
The original constraint gives $D_V=r\phi_V^2/H\in L^1$ for $V<0$.
Its continuous horizon limit and constant continuation make $D$
absolutely continuous through zero. Thus $r_V=-HD$ is absolutely
continuous and satisfies $r_{VV}=\widetilde\ell_Vr_V-r\phi_V^2$
almost everywhere on the continued ray. Place the other characteristic initial
side at a late $V_j<0$. Equation \eqref{eq:extension-norm} and
the positive range of $r,H$ give a common local existence size
for these sections of this solution, by \eqref{eq:local} with
the rough direction interchanged. Taking $u_a$ close to $u_0$
and then $V_j$ close to zero gives a rectangle crossing $V=0$.
Local uniqueness identifies its $V<0$ part with the original
development.

On $V=V_j$, the $u$ constraint follows from
\eqref{eq:local-raychaudhuri} and continuity of $\Psi_u$.
The radial flux and quadratic identity in the local construction,
applied in the two null directions, propagate both constraints
throughout the rectangle:
\begin{align}\label{eq:extension-constraint}
 \frac{r_u(u,V)}{H(u,V)}
 &=\frac{r_u(u_a,V)}{H(u_a,V)}
 -\int_{u_a}^u\frac{r\phi_u^2}{H}(x,V)\dd x,\nonumber\\
 \frac{r_V(u,V)}{H(u,V)}
 &=\frac{r_V(u,V_j)}{H(u,V_j)}
 -\int_{V_j}^V\frac{r\phi_V^2}{H}(u,W)\dd W.
\end{align}
The second identity holds with $r_V/H$ absolutely continuous in $V$.
The integral equations imply the wave equations in distributions. In a regular
spherical chart, the connection coefficients are sums of bounded angular terms
and bounded smooth factors multiplying
$\widetilde\ell_u,\widetilde\ell_V,r_u,r_V$ and are locally
$L^2$. The Maxwell form $eH/(2r^2)\dd u\wedge\dd V$ satisfies
both Maxwell equations. This gives the claimed weak extension
for each solution.

\textit{Local causal identity.}
Write $V_i=V(0)$ and
\[
 \Sigma=\bigl(\{V_i\}\times[0,U)
       \;\cup\;[V_i,0)\times\{0\}\bigr)\times\mathbb S^2.
\]
These are precisely the original characteristic sides in regular
coordinates. To place them in an open spacetime, continue $a_U(v)$
by $a_U(0)$ for $v<0$, and extend the original regular fields
$(r,\phi,\widetilde\ell)$ by replacing $v,u$ with
$\max(v,0),\max(u,0)$. This gives a continuous positive-radius
metric on the past collar. With $V_-=V(-1)$, set
\[
 P=(V_-,0)\times(-1,U)\times\mathbb S^2,\qquad
 M=P\cup N,
\]
where $N=(V_j,b)\times(u_a,u_b)\times\mathbb S^2$ is the
crossing neighborhood constructed above. The original fields
and the crossing fields agree on $P\cap N$, so $M$ carries
one continuous Lorentzian metric
$g=-H\dd V\dd u+r^2\dd\omega^2$.
The auxiliary past collar is used only for causality; the field
equations hold on the original development and on $N$.

Define $D_M^+(\Sigma)$ by all past-inextendible locally Lipschitz
causal curves through a point, and set
$H_M^+(\Sigma)=\overline{D_M^+(\Sigma)}^{\,M}
\setminus I_M^-(D_M^+(\Sigma))$.
Inextendibility means absence of a past endpoint in $M$.
The future orientation gives $\dot V+\dot u\ge0$ and
\[
 r^2|\dot\omega|^2\le H\dot V\dot u,
 \qquad \dot V,\dot u\ge0
\]
almost everywhere. Timelike vectors have both null components
strictly positive.

Let a past causal curve start at $p\in N$ with $V_p<0$.
If it avoids $\Sigma$, continuity and coordinate monotonicity
confine it to
$K=[V_i,V_p]\times[0,u_p]\times\mathbb S^2\subset M$.
The ratio $H/r^2$ is bounded on this compact set, so for a constant
$C$ each ambient sphere coordinate satisfies
$|\dot\omega_i|\le C(\dot V+\dot u)$.
Local Lipschitz regularity gives absolute continuity on compact
parameter intervals. Thus $V$, $u$, and
$C(V+u)+\omega_i$ are bounded monotone functions and have past
limits. The sphere is closed, so the full curve has a past
endpoint in $K$, a contradiction. Every such curve therefore
meets the original initial set $\Sigma$.

For $p\in M$ with $V_p\ge0$, necessarily $p\in N$.
The radial null curve
$\gamma(s)=(V_p,u_a+(u_p-u_a)e^s,\omega_p)$, $s\le0$,
avoids $\Sigma$ and has its past limit outside $M$.
Consequently $D_M^+(\Sigma)\subset\{V<0\}$, with equality
after intersection with $N$. Strict increase of $V$ along
timelike curves and short radial timelike segments give
$N\cap I_M^-(D_M^+(\Sigma))=N\cap\{V<0\}$.
Taking the local closure yields
\begin{align}\label{eq:local-cauchy-horizon}
 N\cap D_M^+(\Sigma)&=N\cap\{V<0\},\nonumber\\
 N\cap H_M^+(\Sigma)&=N\cap\{V=0\}.
\end{align}
The argument uses only the original metric until a curve meets
$\Sigma$, so this identity is independent of the chosen past
collar metric. Thus the original characteristic initial surface
determines the local causal identity of the horizon.
\end{proof}

%% file: sections/local.tex
\section{Local characteristic construction}\label{sec:local}
We establish the finite-rectangle theory used for the common interior
and for continuation across the horizon.

For $\Psi=(r,\phi,\ell)$, the wave equations \eqref{eq:wave} have
smooth coefficients on a compact positive range of $r,e^\ell$ and
bilinear first-derivative sources. On a rectangle of side $d\le1$,
the integral equations give the difference estimate
\begin{equation}\label{eq:local}
 \mathcal D\le C\mathcal D_{\rm data}+C(d+\sqrt d)\mathcal D,
 \qquad
 \mathcal D=\|\delta\Psi\|_\infty+
 \|\delta\Psi_v\|_\infty+
 \|\delta\Psi_u\|_{L^2_uC_v}.
\end{equation}
The factors are the lengths of the $v$ integral and the
Cauchy--Schwarz bound for the $u$ integral. Choosing $d$ small
gives existence, uniqueness, and continuous dependence, as in the
characteristic construction of Gajic and Luk
\cite[Sec.~10]{GajicLuk2019}. The $\Lambda$ term is a smooth
zero-order potential. Here the incoming scalar derivative is in
$L^p$, and $\ell_u=-rf^2$ is in $L^{p/2}\subset L^2$.
The scalar integral equations preserve $L^p_uC_v$; the conformal
equation preserves $L^{p/2}_uC_v$. The local existence size depends
on these norms and the positive range of the metric coefficients.

The incoming constraint propagates directly at this regularity.
The radial wave equation gives
\[
 r\nu=-(r_\bullet-u)+\int_0^v re^\ell P\dd w.
\]
The map $u\mapsto\Psi(u,\cdot)$ is absolutely continuous into
$C_v$. On a compact positive radius range, smooth composition and
the bounded $v$-integration operator therefore make $\nu$ absolutely
continuous in $u$. Differentiating the displayed identity gives
$\nu_u$ almost everywhere. For
$E_u=\nu_u-\ell_u\nu+r\phi_u^2$, the wave equations and
$P'-P/r-Q=0$ give $\partial_v(rE_u)=0$.
Its incoming value vanishes. The integrating factor $e^{-\ell}$
and the horizon normalization now yield
\begin{equation}\label{eq:local-raychaudhuri}
 \frac{\nu(u,v)}{\Omega^2(u,v)}
 =-\frac14-\int_0^u\frac{r\phi_u^2}{\Omega^2}(x,v)\dd x.
\end{equation}
In particular $\nu<0$ and $0<\kappa\le1$ throughout the rectangle.
For the outgoing constraint, use
$[r\lambda]_0^u=\int_0^u re^\ell P\dd x$ and the quadratic identity
\[
 \partial_u(\lambda^2+r\lambda\ell_v-r^2\phi_v^2)
 =\partial_v(re^\ell P).
\]
The left side is the derivative of an absolutely continuous
function of $u$, since both $\Psi$ and $\Psi_v$ are $L^2_uC_v$
primitives. Integration in $u$ commutes with $v$ differentiation
in $C_u$. The outgoing data satisfy
$\lambda_{H,v}=2K_H\lambda_H-h^2/r_H$, by their radius and mass
equations. Differentiating the radial flux identity therefore gives
$\lambda_v=\ell_v\lambda-r\phi_v^2$ throughout the rectangle, and
\[
 \frac{\lambda(u,v)}{\Omega^2(u,v)}
 =\frac{\lambda(u,0)}{\Omega^2(u,0)}
 -\int_0^v\frac{r\phi_v^2}{\Omega^2}(u,w)\dd w.
\]
On a finite rectangle, $\nu<0$ gives $r\le r_H(v)$, while
$\Omega^2\le-4\nu$ yields
\[
 \int_0^u\Omega^2(x,v)\dd x\le4\bigl(r_H(v)-r(u,v)\bigr).
\]
If $r$ is bounded away from zero, the horizontal radial flux
therefore bounds $\lambda$ uniformly in terms of the fixed horizon
data and radius range. Integrating the radial equation for
$\log(-\nu)$ gives
\begin{equation}\label{eq:finite-nu}
 \nu(u,v)=-\exp\int_0^v
 \left[\kappa\left(\frac1r-\frac{e^2}{r^3}-\Lambda r\right)
                         -\frac\lambda r\right]\dd w
\end{equation}
and bounds $-\nu$ above and below. On $[0,U]\times[0,V]$, set
$F(u)=\sup_{0\le v\le V}|\phi_v(u,v)|$, which is continuous.
The vertical scalar integral gives
$|\phi_u(u,v)|\le C|f(u)|+CF(u)$ for almost every $u$ and all $v$.
Substitution into the horizontal integral yields
\[
 F(u)\le C(1+\|f\|_{L^1})+C\int_0^u F(x)\dd x.
\]
Applying Gronwall to the continuously differentiable right-hand side
bounds $F$ uniformly and gives
$\|\phi_u\|_{L^p_uC_v}\le C(1+\|f\|_{L^p})$ on bounded data sets.
The function $\log(-\nu)-\ell$ is absolutely continuous in $u$;
its almost-everywhere derivative $r\phi_u^2/(-\nu)$ integrates to
$-\log\kappa$ by the horizon normalization. The two-sided $\nu$
bounds therefore give positive lower bounds on $\kappa$ and
$\Omega^2=-4\kappa\nu$.
None of the three wave sources in \eqref{eq:wave} contains
$\ell_u$ or $\ell_v$. The preceding bounds therefore control each
source by $C(1+|f(u)|)$ for almost every $u$ and all $v$.
Integration along the characteristics gives
\[
 |\ell_u(u,v)|\le r_\bullet |f(u)|^2+C(1+|f(u)|),
 \qquad |\ell_v(u,v)|\le C.
\]
A further $v$ integration bounds $\Psi$ itself. Thus all three
norms in \eqref{eq:local} are bounded, with $\ell_u$ retaining its
$L^{p/2}_uC_v$ regularity. Restarting the characteristic integral
equations gives the finite-rectangle continuation criterion.
All these constants are uniform for the data family on each fixed
finite rectangle.